\def\UD{{\it UD}\xspace}
\newtheorem{definition}{Definition}[section]
\newtheorem{proposition}[definition]{Proposition}
\newtheorem{theorem}[definition]{Theorem}
\newtheorem{lemma}[definition]{Lemma}
\newtheorem{corollary}[definition]{Corollary}
\newtheorem{example}[definition]{Example}
\newtheorem{remark}[definition]{Remark}
\newenvironment{proof}{\vspace{3pt}{\noindent\em Proof.} }
                       {\hfill \hbox{$\square$}\vspace{3pt}\\}
\title{Monoids and Maximal Codes}
\author{Fabio Burderi
\institute{Universit\`{a} Degli Studi\\ Palermo, Italy}
\institute{ Dipartimento di Matematica ed
Applicazioni,\\Universit\`{a} Degli Studi di Palermo,\\
Via Archirafi 34, 90123 Palermo, Italy\\}
\email{burderi@math.unipa.it}
}
\begin{document}
\maketitle


\begin{abstract}
  In recent years codes that are not Uniquely Decipherable (\UD)
  have been studied partitioning them in classes that localize the
  ambiguities of the code. A natural question is how we can extend
  the notion of maximality to codes that are not \UD. In this paper
  we give an answer to this question.\\
  To do this we introduce a partial order in the set of
  submonoids of a monoid showing the existence, in this poset,
  of maximal elements that we call {\em full} monoids. Then a set of
  generators of a full monoid is, by definition, a maximal code. We
  show how this definition extends, in a natural way, the existing
  definition concerning \UD codes and we find a characteristic
  property of a monoid generated by a maximal \UD code.
\end{abstract}


\section{Introduction}

At the beginning, in the context of information theory,
the word \emph{code} has denoted what we call here
Uniquely Decipherable (\UD) code, that is a set of words
with the property that every concatenation of words of
the set (called \emph {message}) has
an unique decomposition in code words. This notion, in the
next years, has been weakened so we call here code just a set of
non-empty words.

A notion weaker than uniquely decipherability
has been used in several situations: to investigate natural languages
(see~\cite{Gonenc73}) or to study situations in which it is allowed
to recover the original message up to a permutation of
the code words (see~\cite{Lempel86}, \cite{Restivo89}, \cite{HeadWeber95})
or even when the only information to recover is the number
of code words (see~\cite{WeberHead96}).
In other cases the study has been oriented toward sets of words with a
constraint source (see~\cite{DalaiLeonardi05}).
In \cite{Guzman99}, Guzm{\'a}n has been introduced the notion of
{\em variety} of codes to study, in a general approach, decipherability conditions
weaker than \UD.

In \cite{BurderiRestivo07}, studying varieties of codes
under the aspect of uniform distribution of probability,
we noted that the construction,
introduced by Ehrenfeucht and Rozemberg in \cite{EhrenfeuchtRozemberg86},
for embedding a regular \UD code in a complete
and regular \UD code, also works in the ambit of
varieties of codes: the new words, introduced
by the construction, do not create new relations between
code words. Indeed the only relations between the code words are that
existing before the construction.
\\This observation has lead to deepen the study of the relations
that arise in a set of non-empty words and so in \cite{2BurderiRestivo07},
generalizing a construction used in \cite{BurderiRestivo07},
we introduced the notion of {\em coding partition}.
Roughly speaking a partition of a code is a coding partition
if any message 
has a unique factorization in blocks: a block is the
concatenation of words from one class of the partition,
and consecutive blocks are composed by words from
different classes of the partition. In this case the
possible ambiguities of the code are confined in the classes of the
partition.

In \cite{BerstelPerrinReutenauer10}, the very important 
class of maximal \UD codes is studied.
In the case of thin \UD codes, is known, for example,
the equivalence between maximality and completeness.

In this paper we define the maximality of a code
by an algebraic property of the monoid generated by
the code itself.
We show that this definition of maximality generalizes
the existing one concerning \UD codes.
\noindent We present, moreover, some classical result on \UD codes
that we can easily re-establish in the general case.


\section{Partitions of a code} \label{section.1}
Let $A$ be an alphabet. We denote by $A^*$ the set of finite
words over the alphabet $A$, and by $A^+$ the set of non-empty
finite words. $A^*$ is a monoid under the concatenation operation
of two words, with the empty word as the neutral element.
A \emph{code} $X$ is here a subset of $A^+$. Its
elements are called \emph{code words}, the elements of $X^*$ {\em
messages }.

\medskip

A code $X$ is said to be {\em uniquely decipherable} (\UD) if
every message has a unique factorization into code words, i.e. the
equality

\begin{displaymath}x_{1}x_{2} \cdots x_{n} = y_{1}y_{2} \cdots
y_{m},\end{displaymath} $x_{1},x_{2}, \dots ,x_{n},\,\,y_{1},y_{2},
\dots, y_{m}\in X$, implies\,\, $n=m$\,\, and\,\, $x_{1}=y_{1},
\dots, x_{n}=y_{n}$.

\medskip

Let $X$ be a code and let
\[
P=\{X_i \mid i \in I\}
\]
be a partition of $X$ i.e.,\;  $\bigcup_{i\in I} X_i=X$ and $X_i \cap
X_j = \emptyset$ \, iff \, $i\neq j.$

A $P$-$factorization$ of a message $w\in X^+$ is a
factorization $w=z_1z_2\cdots z_t$, where
\begin{itemize}
  \item  for each $i,\,\,\,z_i\in X_k^+$,\,\, for some $ k\in I$
  \item if $t>1$,\,\, $z_i\in X_k^+ \Rightarrow z_{i+1}\notin X_k^+$ \quad $(1\leq
  i\leq t-1)$.

\end{itemize}

The partition $P$ is called a {\em coding partition} if
any element $w\in X^+$ has a {\em unique} $P$-$factorization$,
i.e. if
\[
w= z_{1}z_{2} \cdots z_{s}= u_{1}u_{2} \cdots u_{t} ,
\]
where $z_{1}z_{2} \cdots z_{s},\,\,u_{1}u_{2} \cdots u_{t}$ are
$P$-$factorizations$ of $w$, then $s=t$ and $z_{i}=u_{i}$ for $i=
1, \dots, s$.

\noindent We observe that the trivial partition $P=\{X\}$ is always a coding partition.

Let $w\in A^+$ be a word. A \emph{factorization} of $w$ is a sequence of words
$(v_i)_{1\leq i \leq s}$ such that $w=v_1v_{2} \cdots v_{s}$. Let $X$ be a
code. A \emph{relation} is a pair of factorizations
$x_{1}x_{2} \cdots x_{s} = y_{1}y_{2} \cdots y_{t}$ into code words of
a same message $z\in X^+$; the relation is said non-trivial if the
 factorizations are distinct.
In the sequel, when no confusion arises,
sometimes we will denote by $z$ both the ``word'' $z$ and the {\em
  relation} $x_{1}x_{2} \cdots x_{s} = y_{1}y_{2} \cdots y_{t}$. We
say that the relation $x_{1}x_{2} \cdots x_{s} = y_{1}y_{2} \cdots
y_{t}$ is {\em prime} if for all $i<s$ and for all $j<t$ one has
$x_{1}x_{2} \cdots x_{i} \neq y_{1}y_{2} \cdots y_{j}$.

In \cite{2BurderiRestivo07}, the following theorem is proved.

\begin{theorem} \label{theorem.coding}
Let $P=\{X_i \mid i \in I\}$ be
a partition of a code $X$. The partition $P$ is a coding partition
iff for every prime relation $x_{1}x_{2} \cdots x_{s} =
y_{1}y_{2} \cdots y_{t}$, the code words $x_i, y_j$ belong to the
same component of the partition.
\end{theorem}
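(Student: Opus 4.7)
The plan is to exploit the canonical $P$-factorization attached to a code-word factorization $z=x_1\cdots x_n$: since the classes $X_i$ are pairwise disjoint, each $x_\ell$ lies in a uniquely determined $X_{k_\ell}$, and collecting maximal runs of equal class index yields a well-defined $P$-factorization of $z$. Conversely, any $P$-factorization of $z$ may be re-expanded into some code-word factorization, and the canonical grouping applied to that expansion recovers the original $P$-factorization. Both directions of the theorem will use this device to pass between the code-word and the $P$-factorization levels.

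For the direct implication, I would take a prime relation $w=x_1\cdots x_s=y_1\cdots y_t$ and form its two canonical $P$-factorizations, $w=\zeta_1\cdots\zeta_p$ from the $x_\ell$ and $w=\eta_1\cdots\eta_q$ from the $y_\ell$. The coding partition hypothesis forces $p=q$ and $\zeta_1=\eta_1$. If $p\ge 2$ then $\zeta_1$ is a proper prefix of $w$ presentable both as $x_1\cdots x_i$ with $i<s$ and as $y_1\cdots y_j$ with $j<t$, contradicting primeness. Hence $p=1$, and all the $x_\ell$ and $y_\ell$ lie in the single class carried by that unique block.

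For the converse I would argue by contradiction: assume every prime relation is class-homogeneous and pick a message $w$ admitting two distinct $P$-factorizations. Expanding these into code-word factorizations gives $w=x_1\cdots x_a=y_1\cdots y_b$, and the two code-word tuples must be distinct, for applying the canonical grouping to either recovers the corresponding $P$-factorization and the two were chosen distinct. Enumerate the common breakpoints $0=q_0<q_1<\cdots<q_m=|w|$ of the two code-word factorizations; on each segment $[q_{k-1},q_k]$ the induced sub-relation is prime by construction, so by hypothesis all its code words share a common class $X_{c_k}$.

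The contradiction then drops out: within segment $k$ each code-word factorization contributes a single canonical $X_{c_k}$-block, and the merging of this block with its right neighbour depends only on whether $c_k=c_{k+1}$, not on which side of the relation one reads. The two canonical $P$-factorizations of $w$ therefore coincide, contradicting the choice of $w$. The subtle step I expect to have to handle carefully is precisely this passage between the code-word and $P$-factorization levels: disjointness of the $X_i$ is invoked both to lift distinct $P$-factorizations to distinct code-word factorizations and to guarantee that the per-segment class $c_k$ is unambiguously defined.
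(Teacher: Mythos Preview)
Your argument is correct. The forward direction is clean: the two canonical $P$-factorizations of a prime relation must coincide, and if they had more than one block the first block would witness a forbidden common proper prefix. For the converse, the decomposition of a double factorization at its common breakpoints into prime sub-relations is the standard move, and your observation that the resulting class sequence $(c_1,\dots,c_m)$ is the same on both sides---hence the canonical regroupings agree---closes the argument. The point you flag as delicate (that disjointness of the $X_i$ makes the class of each code word, and hence the canonical grouping, well-defined) is exactly the hinge, and you handle it.

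Note, however, that the present paper does not actually prove this theorem: it merely quotes the statement and attributes the proof to the earlier paper \cite{2BurderiRestivo07}. So there is no in-paper proof to compare against. That said, your approach---reduce an arbitrary relation to its prime constituents via common breakpoints, and pass between code-word factorizations and $P$-factorizations via the canonical grouping---is the natural one and almost certainly coincides with the argument in the cited source. One small remark: your proof implicitly reads ``unique $P$-factorization'' as including the class labels of the blocks (equivalently, as uniqueness of the reduced form in the sense of free products, cf.\ Theorem~\ref{theorem.free}); this is the intended reading, but it is worth making explicit, since the literal clause ``$s=t$ and $z_i=u_i$'' in the definition speaks only of the words.
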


\medskip
Recall that there is a natural partial order between the partitions of a
set $X$: if $P_1$ and $P_2$ are two partitions of $X$ then $P_1\leq
P_2$ if the elements of $P_1$ are unions of elements of $P_2$
and we say that $P_2$ is {\em finer} then $P_1$.
Then from Theorem \ref{theorem.coding} we have the following corollary.

\begin{corollary}\label{corollary.coding}
Let $P$ and $P'$ be two partitions of a code $X$ with $P\leq
P'$. If $P'$ is a coding partition then $P$ also.
\end{corollary}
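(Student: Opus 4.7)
The plan is to invoke Theorem~\ref{theorem.coding} as a black box and reduce everything to a purely set-theoretic observation about partitions.

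First I would unpack the order: since $P \leq P'$, every block of $P$ is (by definition) a union of blocks of $P'$. Equivalently, for each block $Y' \in P'$ there exists a unique block $Y \in P$ with $Y' \subseteq Y$. This is the only structural fact I will need.

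Next, to show $P$ is a coding partition, by Theorem~\ref{theorem.coding} it suffices to check that for every prime relation
\[
x_{1}x_{2} \cdots x_{s} = y_{1}y_{2} \cdots y_{t}
\]
all the code words $x_i, y_j$ lie in a common block of $P$. Since $P'$ is assumed to be a coding partition, applying Theorem~\ref{theorem.coding} in the other direction gives that all these $x_i, y_j$ already lie in a common block $Y' \in P'$. By the observation of the previous paragraph, $Y'$ is contained in some block $Y \in P$, and hence all $x_i, y_j \in Y$. Applying Theorem~\ref{theorem.coding} once more (in the ``if'' direction) yields that $P$ is a coding partition.

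There is essentially no obstacle here; the work was done in proving Theorem~\ref{theorem.coding}, and the corollary is a one-line consequence once the direction of the refinement order is read correctly. The only thing to be careful about is not to confuse ``$P \leq P'$'' with ``$P$ is finer than $P'$''; the convention in the excerpt makes $P'$ the finer one, which is exactly what makes the implication go in the stated direction (coarsening a coding partition preserves the coding property, while refining it need not).
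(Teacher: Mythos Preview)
Your argument is correct and is exactly the approach the paper has in mind: the corollary is stated immediately after Theorem~\ref{theorem.coding} with the remark that it follows from that theorem, and your use of it (prime relations stay inside a single $P'$-class, hence inside the $P$-class containing it) is the intended one-line deduction. Your care about the direction of the order $P\leq P'$ matches the paper's convention precisely.
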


What follows, till Theorem \ref{decid.fini},
is stated in \cite{2BurderiRestivo07}.

\begin{theorem}\label{theorem.lattice}
The set of the coding partitions of a code $X$ is a complete
lattice.
\end{theorem}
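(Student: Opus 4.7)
The plan is to treat $S$, the set of coding partitions of $X$ under the refinement order, as a sub-poset of the full partition lattice $L$ of $X$, and to prove two facts which together give the complete lattice structure on $S$: (i) $S$ is closed under nonempty meets taken in $L$, and (ii) $S$ has a maximum element $P^{*}$. Given (i) and (ii), the complete lattice structure on $S$ drops out of a standard argument: nonempty meets coincide with those in $L$, the meet of the empty family is $P^{*}$, and for any $T\subseteq S$ one recovers the join $\bigvee_{S} T$ as the meet (in $S$) of the set of upper bounds of $T$ inside $S$, this set being nonempty because it contains $P^{*}$.

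For (i), I would take a nonempty family $\{P_{i}\}_{i\in I}\subseteq S$, form its coarsest common coarsening $Q=\bigwedge_{L} P_{i}$, and observe that $Q\leq P_{i}$ for each $i$, so that Corollary~\ref{corollary.coding} immediately yields $Q\in S$; it is then clear that $Q$ is also the meet of $\{P_{i}\}$ in $S$.

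For (ii), I plan to construct $P^{*}$ explicitly. Define a binary relation $\approx$ on $X$ by declaring $a\approx b$ whenever there is a prime relation $x_{1}x_{2}\cdots x_{s}=y_{1}y_{2}\cdots y_{t}$ in which both $a$ and $b$ appear among the factors, and let $\sim$ denote its reflexive, symmetric and transitive closure. Setting $P^{*}=X/{\sim}$, the very definition of $\sim$ forces every prime relation to have all of its participating code words lying in a single $\sim$-class, so by Theorem~\ref{theorem.coding} $P^{*}$ is a coding partition. Conversely, for any coding partition $P$, Theorem~\ref{theorem.coding} tells us that $\approx$-related words share a $P$-block; by closure the same holds for $\sim$-related pairs, and therefore every block of $P^{*}$ is contained in a block of $P$, i.e.\ $P\leq P^{*}$. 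Hence $P^{*}$ is the greatest element of $S$.

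The main obstacle is step (ii): both directions of Theorem~\ref{theorem.coding} are needed — once to certify that the equivalence classes generated by $\approx$ form a coding partition, and once to show this partition refines every other coding partition. Step (i) is an immediate application of Corollary~\ref{corollary.coding}, and once (i) and (ii) are in place, the passage to a complete lattice is purely formal.
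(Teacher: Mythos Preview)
The paper does not prove Theorem~\ref{theorem.lattice} itself; the statement is quoted from an earlier paper of the author without proof, so there is no in-paper argument to compare against.

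Your argument is correct. One terminological slip: in step~(i) you call $\bigwedge_{L} P_{i}$ the ``coarsest common coarsening''; with the paper's convention ($P_{1}\leq P_{2}$ means $P_{2}$ refines $P_{1}$) the meet is the \emph{finest} common coarsening---the coarsest one would simply be $\{X\}$. This does not affect the logic, since all you actually use is that $Q\leq P_{i}$ for every $i$, after which Corollary~\ref{corollary.coding} applies.

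It is worth noting that your $P^{*}$ coincides with the characteristic partition $P(X)$ introduced immediately after the theorem. A slightly more direct route is also available: rather than (i)+(ii), show that $S$ is closed under arbitrary \emph{joins} taken in $L$ (for any prime relation, Theorem~\ref{theorem.coding} places all participating code words in a single block of each $P_{i}$, hence in their intersection, which is a block of $\bigvee_{L} P_{i}$), and observe that the trivial partition $\{X\}$ is always a coding partition and is the least element of $S$. This yields the complete lattice structure without building $P^{*}$ separately, although your explicit construction has the merit of identifying the top element concretely.
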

As a consequence of previous theorem we can give the
following definition. Given a code $X$, {\em the} finest coding partition $P$ of $X$
is called the {\em characteristic} partition of $X$ and it is
denoted by $P(X)$.

A code $X$ is called {\em ambiguous} if it is not \UD.
It is called {\em totally ambiguous } ($TA$) if $|X| > 1$ and
$P(X)$ is the trivial partition: $P(X)=\{X\}$.

\begin{remark}\label{remark.UD.P(X)}
So \UD codes and $TA$ codes correspond to the two
extremal cases since a code is \UD iff $P(X)=\{\{x\}\mid x\in X\}$.
\end {remark}
\medskip

Let $X$ be a code and let $P(X)$ be the characteristic partition of
$X$. Let $X_0$ be the union of all classes of $P(X)$ having only one
element, i.e. of all classes $Z \in P(X)$ such that $|Z|= 1$. The
code $X_0$ is a \UD code and is called the {\em unambiguous
component} of $X$. From $P(X)$ one then derives another partition of
$X$
\[
P_C(X)=\{X_i \mid i \geq 0 \},
\]
where $\{X_i \mid i \geq 1\}$
is the set of classes of $P(X)$ of size greater than $1$. If there
are such sets $X_i$ with $i\geq1$, then they are $TA$.
They are called the {\em $TA$ components} of $X$.
By Corollary~\ref{corollary.coding} we have that
$P_C(X)$ is a coding partition (indeed $P_C(X)\leq P(X)$) and it is
called the {\em canonical coding partition} of $X$: it defines a
{\em canonical decomposition} of a code $X$ in at most one
unambiguous component and a (possibly empty) set  of $TA$ components.
Roughly speaking, if a code $X$ is not \UD, then its canonical
decomposition, on one hand separates the unambiguous component of
the code (if any), and, on the other, localizes the ambiguities
inside the $TA$ components of the code. On the contrary, if $X$ is
\UD, then its canonical decomposition contains only the unambiguous
component $X_0$. Moreover if $X$ is \UD then every partition of $X$
is a coding partition.

\begin{theorem}
There is a Sardinas-Patterson like
algorithm to compute the canonical coding partition of a finite
code $X$.
\end{theorem}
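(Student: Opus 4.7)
My plan is to mimic the classical Sardinas--Patterson construction while carrying extra bookkeeping that records which code words participate in each partial overlap. Recall that the classical algorithm starts from $U_{0}=X$, sets $U_{1}=(X^{-1}X)\setminus\{\varepsilon\}$, and iterates $U_{n+1}=(X^{-1}U_{n})\cup(U_{n}^{-1}X)$; the code is \UD{} iff no $U_{n}$ with $n\geq 1$ contains $\varepsilon$. For finite $X$ the set of possible residuals $\bigcup_{n}U_{n}$ is finite (every residual is a suffix of a word in $X$), so the procedure terminates. To reach the characteristic partition $P(X)$, by Theorem~\ref{theorem.coding} I need to know, for each pair of code words, whether they appear together in some prime relation, and then take the transitive closure of this relation on $X$.

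The idea is therefore to run the Sardinas--Patterson iteration on \emph{labelled residuals} $(w,T)$, where $T\subseteq X$ records the code words that have already been consumed on either side of the partial factorization producing $w$. Concretely, for every overlap $x=yw$ (or $y=xw$) with $x,y\in X$, $x\neq y$, put $(w,\{x,y\})$ into $\widetilde U_{1}$; given $(w,T)\in \widetilde U_{n}$ and $x\in X$ such that $xw\in U_{n}^{-1}X$ or $wx\in\ldots$ etc., place the corresponding new residual with label $T\cup\{x\}$ into $\widetilde U_{n+1}$. The state space is a subset of $(\text{residuals})\times 2^{X}$, which is finite, so the labelled iteration halts as soon as no new labelled pair appears. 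Every time $(\varepsilon,T)$ is produced, $T$ is precisely the set of code words involved in a prime relation of $X$, since the computation traces back a pair of distinct factorizations with no common non-trivial prefix, by construction of the Sardinas--Patterson sets.

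From the collection $\mathcal{T}=\{T:(\varepsilon,T)\ \text{occurs}\}$ I define the equivalence relation $\sim$ on $X$ generated by $x\sim y$ whenever $\{x,y\}\subseteq T$ for some $T\in\mathcal{T}$, and declare its classes to be the blocks of the computed partition. By Theorem~\ref{theorem.coding}, any coding partition must put all code words of every prime relation into one block, so the finest coding partition $P(X)$ refines no partition coarser than that given by $\sim$; conversely the partition induced by $\sim$ is a coding partition because any prime relation supplies a single $T$ and forces exactly the identifications already carried out. Hence the equivalence classes of $\sim$ are the blocks of $P(X)$, and $P_{C}(X)$ is obtained in one pass by amalgamating the singleton classes into $X_{0}$ and keeping the remaining ones as the $TA$ components.

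The main obstacle is the correctness correspondence between reaching $(\varepsilon,T)$ and actually witnessing a \emph{prime} relation whose code-word support is $T$: the usual Sardinas--Patterson analysis produces all factorization equalities, but one has to argue both that every prime relation $x_{1}\cdots x_{s}=y_{1}\cdots y_{t}$ shows up as a sequence of labelled transitions ending in $(\varepsilon,\{x_{i}\}\cup\{y_{j}\})$, and that the primality condition is exactly what rules out spurious paths revisiting $\varepsilon$ before the relation is completed. Once this bijection between labelled accepting runs and prime relations is established, termination and extraction of $P(X)$ and $P_{C}(X)$ are immediate from finiteness of the state space and Theorem~\ref{theorem.coding}.
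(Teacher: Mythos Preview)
The paper does not actually prove this theorem here; it is quoted from \cite{2BurderiRestivo07} along with the surrounding results, so there is no in-paper argument to compare against. I can therefore only assess your proposal on its own terms.

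Your outline is sound in spirit, but there is one genuine gap. As written, your labelled iteration allows transitions \emph{out of} states of the form $(\varepsilon,T)$: from $\varepsilon$ every $x\in X$ is a legal continuation, producing $(x,\,T\cup\{x\})$. Such a continuation concatenates a fresh prime relation onto an already completed one, and the resulting accepting state $(\varepsilon,T')$ then carries the union of the supports of two \emph{independent} prime relations. Concretely, for $X=\{a,aa,b,bb\}$ one reaches $(\varepsilon,\{a,aa\})$ and $(\varepsilon,\{b,bb\})$ from genuine prime relations, but continuing from the first through $b$ and $bb$ yields $(\varepsilon,\{a,aa,b,bb\})$, which would force all four words into a single class, whereas $P(X)=\{\{a,aa\},\{b,bb\}\}$. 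Your final paragraph shows you sense this issue, but the algorithm description itself must explicitly forbid outgoing transitions from residual $\varepsilon$ (equivalently, treat $(\varepsilon,\cdot)$ as absorbing). Once that is enforced, every accepting run corresponds to a prime relation and, conversely, every prime relation traces such a run ending in $(\varepsilon,S)$ with $S$ exactly its support; termination and the extraction of $P(X)$ via Theorem~\ref{theorem.coding} then go through as you say.

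A secondary remark: your state space $(\text{suffixes of }X)\times 2^{X}$ is finite but exponential in $|X|$, so the procedure as stated is not polynomial. This does not affect the existence statement you are proving, but the construction in \cite{2BurderiRestivo07} is presumably organized so as to avoid carrying full subsets, e.g.\ by working directly with the graph on residuals whose edges are labelled by code words and reading off which labels can occur together on a path from an initial overlap to $\varepsilon$.
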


\begin{example}\label{example.canonical}
  Let us consider the code $X\subseteq \{0, 1\}^*$, $X = \{00, 0010, 1000, 11, 1111,
\\010, 011\}$.
In \cite{2BurderiRestivo07} it is shown that the canonical coding
partition of $X$ is $P_C(X)=\{X_0, X_1, X_2\}$ with $X_0=\{010, 011\}$, $X_1=\{00, 0010,
  1000\}$, $X_2=\{11, 1111\}$.
\end{example}

\begin{theorem}\label{decid.fini}
  Given a regular code $X$ and a partition $P=\{X_1,\dots,X_n\}$ of $X$ such that
  $X_i$, for $i=1,\dots,n$, is a regular set, it is decidable whether
  $P$ is a coding partition of $X$.
\end{theorem}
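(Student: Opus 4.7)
The plan is to reduce the question to deciding unambiguity of a finite automaton. The idea rests on the definition of a coding partition: $P$ is a coding partition iff every message $w\in X^+$ admits a unique $P$-factorization. I would encode this uniqueness as the unambiguity of an explicit NFA built from the classes $X_i$.

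First, since each $X_i$ is regular, the language $X_i^+$ is regular, so I fix an unambiguous NFA $A_i=(Q_i,A,\delta_i,s_i,F_i)$ recognizing $X_i^+$ (for instance the minimal DFA, which is unambiguous by definition). Then I would assemble an NFA $N$ over $A$ with state set $\{q_0\}\cup\bigcup_{i=1}^{n}(Q_i\times\{i\})$. The transitions are: $\varepsilon$-moves from $q_0$ to every $(s_i,i)$; inside the layer $i$, the transitions of $A_i$, that is $(q,i)\to(q',i)$ on input $a$ whenever $q'\in\delta_i(q,a)$; and, for each $i'\neq i$, an $\varepsilon$-move from $(f,i)$ to $(s_{i'},i')$ whenever $f\in F_i$. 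The accepting states are $\{(f,i):f\in F_i\}$. The inter-layer $\varepsilon$-moves mark precisely the block boundaries of a $P$-factorization (already enforcing the constraint $j_l\neq j_{l+1}$), while the unambiguity of each $A_i$ forces the reading of a single block $z_l\in X_{j_l}^+$ to correspond to a unique sub-path. Consequently, accepting runs of $N$ on input $w$ are in bijection with the $P$-factorizations of $w$; in particular $L(N)=X^+$, since every message admits at least one $P$-factorization (merge maximal runs of consecutive code words from the same class).

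It then suffices to decide whether $N$ is unambiguous, which is a classical decidable property of finite NFAs: form the squared automaton $N\times N$ and test by reachability whether an accepting pair $(p,p')$ with $p\neq p'$ is reachable from $(q_0,q_0)$; such a pair witnesses two distinct accepting runs on a common word. By Theorem~\ref{theorem.coding}, or directly from the definition, $P$ is a coding partition iff $N$ is unambiguous.

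The main obstacle I anticipate is the calibration of $N$: its accepting runs must correspond exactly to $P$-factorizations rather than to arbitrary factorizations of $w$ into individual code words of $X$. Using \emph{unambiguous} $A_i$ handles this, by concentrating all genuine non-determinism of $N$ at the inter-block $\varepsilon$-transitions, which are in bijection with choices of block boundaries in a $P$-factorization.
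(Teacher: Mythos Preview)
The paper does not actually prove this theorem; it is quoted from~\cite{2BurderiRestivo07}, so there is no in-paper argument to compare your proposal against. Your reduction to NFA unambiguity is sound and is the natural approach: since $\varepsilon\notin X_i^+$, no start state $s_i$ is accepting, so the $\varepsilon$-moves of $N$ never chain, and accepting runs of $N$ are indeed in bijection with $P$-factorizations, each intra-layer segment being pinned down by the unambiguity of $A_i$.

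One technical point needs tightening. Your stated test---reachability in $N\times N$ of an accepting pair $(p,p')$ with $p\neq p'$---only detects ambiguity when the two competing runs end at \emph{different} accepting states. Two distinct $P$-factorizations of the same word may well share the same last block $z_t\in X_{j_t}^+$, in which case both runs end at the same state of $N$ and your test misses them. The correct criterion (for a trim $\varepsilon$-free NFA) is that some accepting path in $N\times N$ \emph{visits} an off-diagonal state, not that it ends at one. Since your $\varepsilon$-moves form an acyclic layer of depth one between letter transitions, they can be eliminated without altering run multiplicities, after which the standard unambiguity test applies; alternatively, simply invoke the classical fact that NFA unambiguity is decidable without spelling out the product construction.
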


Still in \cite{2BurderiRestivo07}, it was conjectured that
{\em if $X$ is regular, the number of classes of $P_C(X)$ is finite
and each class of $P_C(X)$ is a regular set.}

\medskip

Finally, the positive answer has given in \cite{BealBurderiRestivo09} where
the following theorem and corollary are proved.

\begin{theorem} \label{theorem.regular}
The canonical partition of a regular code is finite and regular.
Its classes can be effectively computed.
\end{theorem}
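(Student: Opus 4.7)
The strategy exploits Theorem \ref{theorem.coding}: two code words lie in the same class of $P(X)$ exactly when they are joined by a chain in which each consecutive pair co-occurs in some prime relation. The plan is to (i) encode the prime relations of $X$ as paths in a finite automaton built from $X$, (ii) read off the induced equivalence on code words by a union-find on that finite object, and (iii) check regularity of each resulting class.

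For (i), starting from a finite deterministic automaton for $X^*$, I would build a Sardinas--Patterson-style product automaton $\mathcal{B}$ whose states track the ``mismatch'' between two simultaneous factorizations of a common message. By regularity of $X$, only finitely many mismatch values can ever arise (as in the classical proof of decidability of unique decipherability), so $\mathcal{B}$ is finite. Edges of $\mathcal{B}$ correspond to events of the form ``a full code word is consumed on one side'', and each is labelled by the regular subset of $X$ formed by those code words that can be consumed at that point. An accepting run that starts and ends at the zero-mismatch state without revisiting it in between corresponds exactly to a prime relation, and the edge-labels along such a run record precisely the code words appearing in that relation.

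For (ii) and (iii), over the finite edge set of $\mathcal{B}$ I would take the smallest equivalence identifying two edges whenever they co-occur on one such accepting run, and compute it by union-find. Pulling the resulting finite partition back through the labelling map yields a partition of $X$ into finitely many regular classes; separating singleton classes from the rest then recovers $P_C(X)$ together with its $TA$ components.

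The main obstacle is to prove that this automaton-based partition coincides with $P(X)$ rather than being a strict coarsening. One direction is essentially built in: every prime relation corresponds to an accepting run of $\mathcal{B}$, so the construction respects exactly the identifications forced by Theorem \ref{theorem.coding}. The delicate direction is to check that the partition so produced is itself a coding partition, i.e.\ that no spurious merging occurred; this requires a careful induction decomposing an arbitrary double factorization into its prime sub-relations and verifying that each merging step was actually forced. Effectiveness is then automatic, since building $\mathcal{B}$, enumerating its zero-mismatch cycles, and running union-find are all finite computations; Theorem \ref{decid.fini} additionally certifies the output \emph{a posteriori}.
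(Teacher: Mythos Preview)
The paper does not contain a proof of this theorem: it is quoted from \cite{BealBurderiRestivo09}, so there is no in-paper argument to compare against. What can be said is that your outline is in the same spirit as the cited construction --- a product (Sardinas--Patterson type) automaton over a recognizer for $X$, followed by a graph-theoretic computation that extracts the equivalence on code words induced by prime relations --- so at the level of strategy you are aligned with the intended source.

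That said, the proposal has a concrete gap in step (ii). You compute the equivalence on edges as ``the smallest equivalence identifying two edges whenever they co-occur on one such accepting run'', and then assert effectiveness because ``enumerating its zero-mismatch cycles'' is finite. But the set of accepting runs that start and end at the zero-mismatch state without revisiting it is in general infinite (other states may be revisited arbitrarily often), so a naive enumeration is not a finite procedure. What is actually needed is an argument that this co-occurrence equivalence on edges can be read off from finitary data of $\mathcal{B}$ --- for instance via a reachability or strongly-connected-component analysis of the graph obtained by deleting the zero-mismatch state --- and that argument is precisely where the work lies. Relatedly, the ``pullback through the labelling map'' is underspecified: a single code word $x$ may label several edges of $\mathcal{B}$, and those edges need not lie on a common accepting run, so you must also explain why (or explicitly force that) all edges carrying $x$ end up in the same edge-class before you can speak of a well-defined partition of $X$. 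Until these two points are settled, the ``delicate direction'' you flag --- that the output is a coding partition and in fact equals $P(X)$ --- cannot be closed, and the appeal to Theorem~\ref{decid.fini} only certifies an answer once you already have one in hand.
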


\begin{corollary}
  Given a regular code $X$ and a regular partition $P=\{X_1,X_2$,
  $\dots, X_n\}$ of $X$, it is decidable whether $P$ is the canonical coding
  parition of~$X$.
\end{corollary}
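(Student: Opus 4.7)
My plan is to reduce the question to two well-known decidable problems by leveraging Theorem~\ref{theorem.regular}. First, I would apply that theorem to the input $X$ in order to effectively compute the canonical coding partition $P_C(X)=\{Y_1,\dots,Y_m\}$, obtaining each class $Y_j$ as a regular set (say, presented by a finite automaton). Since, by Theorem~\ref{theorem.lattice} and the definition that follows it, $P_C(X)$ is the \emph{unique} finest coding partition of $X$, the question ``is $P$ the canonical coding partition of $X$?'' is equivalent to the question ``does the finite family of regular sets $\{X_1,\dots,X_n\}$ coincide with the finite family of regular sets $\{Y_1,\dots,Y_m\}$?''.

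Next, I would verify whether $n=m$ and whether there exists a bijection $\sigma\colon\{1,\dots,n\}\to\{1,\dots,m\}$ such that $X_i=Y_{\sigma(i)}$ for every $i$. Equality of two regular languages is classically decidable (for instance, by testing emptiness of the symmetric difference, which is itself regular and effectively constructible from given automata). Concretely, for each $i\in\{1,\dots,n\}$ I would search for an index $j$ with $X_i=Y_j$; the algorithm accepts iff this defines a bijection between $\{1,\dots,n\}$ and $\{1,\dots,m\}$. Since $n$ and $m$ are finite and each individual equality test halts, the whole procedure terminates.

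I do not expect a serious obstacle in this argument: all of the genuine difficulty has been absorbed into Theorem~\ref{theorem.regular}, which supplies the effective computability of $P_C(X)$ from $X$. Without that theorem one would be stuck trying to verify the minimality of $P$ among coding partitions directly, and Theorem~\ref{decid.fini} alone would only let us decide that $P$ is \emph{some} coding partition, not that it is the finest one. With Theorem~\ref{theorem.regular} at hand, however, the corollary becomes a routine consequence of the decidability of equality for regular languages.
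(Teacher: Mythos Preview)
Your proposal is correct and matches the paper's intent: the paper states this result as an immediate corollary of Theorem~\ref{theorem.regular} (both are quoted without proof from \cite{BealBurderiRestivo09}), and the only sensible way to read that is precisely your argument---compute $P_C(X)$ effectively via Theorem~\ref{theorem.regular} and then test equality of the two finite families of regular languages. There is nothing to add.
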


From the definition of coding partition we deduce immediately
the next theorem that gives a tool to construct infinitely
many \UD codes starting from any non-$TA$ code with more
than one code word.

\begin{theorem}
Let $P=\{X_i \mid i \in I\}$ be a coding partition of a code with
$|I| > 1$.\\ Then the sets $\{X_{i_1}^+X_{i_2}^+\cdots X_{i_n}^+\mid
n\geq2,\,i_j \in I,\,i_j\neq i_{j+1}\,\,\forall\,\, 1\leq
j<n,\,i_n\neq i_1\}$ are \UD codes.
\end{theorem}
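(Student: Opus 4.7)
The plan is to reduce the unique decipherability of $Y=X_{i_1}^+X_{i_2}^+\cdots X_{i_n}^+$ to the uniqueness of $P$-factorizations, which is guaranteed since $P$ is a coding partition. First I would fix a set $Y$ from the given family and suppose that a message $w$ admits two factorizations
\[
w=y_1y_2\cdots y_p=z_1z_2\cdots z_q,
\]
with $y_k,z_\ell\in Y$. The goal is to deduce $p=q$ and $y_k=z_k$ for every $k$.

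Next I would expand each $y_k$ according to the definition of $Y$: there exist (not necessarily unique) words $u_{k,j}\in X_{i_j}^+$ with $y_k=u_{k,1}u_{k,2}\cdots u_{k,n}$. Concatenating these expansions produces a factorization of $w$ into $pn$ blocks lying in the classes
\[
X_{i_1}^+,X_{i_2}^+,\dots,X_{i_n}^+,\;X_{i_1}^+,X_{i_2}^+,\dots,X_{i_n}^+,\;\ldots
\]
Here the precise role of the hypothesis on the indices appears, and this is the step I expect to be the conceptual heart of the argument: the conditions $i_j\neq i_{j+1}$ for $1\le j<n$ and $i_n\neq i_1$ together say exactly that any two consecutive blocks in this list belong to distinct classes of $P$, cyclically included the junction between an $i_n$-block and the following $i_1$-block. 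Hence the expansion is a valid $P$-factorization of $w$, not merely a concatenation of pieces from the $X_i^+$.

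Carrying out the same expansion for the $z$-factorization yields a second $P$-factorization of $w$ with $qn$ blocks. Since $P$ is a coding partition, Theorem~\ref{theorem.lattice} (and the very definition of coding partition) forces these two $P$-factorizations to coincide block by block. Comparing lengths gives $pn=qn$, hence $p=q$, and matching the blocks in positions $(k-1)n+1,\dots,kn$ yields $u_{k,j}=v_{k,j}$ for all $k,j$, so that $y_k=u_{k,1}\cdots u_{k,n}=v_{k,1}\cdots v_{k,n}=z_k$. This proves $Y$ is \UD. No real obstacle remains after isolating why the cyclic condition $i_n\neq i_1$ is imposed: it is precisely what prevents a $P$-factorization from collapsing two boundaries of consecutive $y_k$'s into a single block, which would otherwise destroy the correspondence.
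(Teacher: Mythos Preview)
Your argument is correct and is precisely the ``immediate'' deduction the paper has in mind: expanding each code word of $Y$ into its $n$ blocks yields a $P$-factorization (the cyclic condition $i_n\neq i_1$ being exactly what makes the junctions between consecutive $y_k$'s legitimate), and uniqueness of $P$-factorizations then forces the two expansions to agree block by block. One minor slip: the reference to Theorem~\ref{theorem.lattice} (the lattice statement) is not what you want here---only the definition of coding partition is used, as you yourself note in the same sentence.
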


We conclude this section with the following theorem concerning the regularity
of the classes of a finite coding partition of a regular code.

\begin{theorem}
Let $\{Y_j \mid j \in J\}$ be a coding partition of a regular code $X$
and let $X_0$ be the unambiguous component of $X$.
If there exists $j_1\in J$ such that
$Y_{j_1}$ is not regular then we have $Y_{j_1}\cap X_0\neq \emptyset$.
Moreover if $J$ is finite then there exists $j_2\in J,\,j_2\neq j_1$ such that
also $Y_{j_2}$ is not regular and $Y_{j_2}\cap X_0\neq \emptyset$.
\end{theorem}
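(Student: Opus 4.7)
The plan is to exploit two structural facts together. \emph{Fact~1.} By Theorem~\ref{theorem.lattice} and the definition of the characteristic partition $P(X)$ as the finest coding partition, any coding partition $\{Y_j\}_{j\in J}$ of $X$ satisfies $\{Y_j\}\leq P(X)$; equivalently, every $Y_j$ is a union of classes of $P(X)$. \emph{Fact~2.} By Theorem~\ref{theorem.regular}, the canonical coding partition $P_C(X)$ is finite and each of its classes is regular; in particular the TA components $X_1,\ldots,X_k$ are finitely many regular sets, and the unambiguous component $X_0$ is regular.

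Combining these, each $Y_j$ splits as $Y_j=(Y_j\cap X_0)\cup T_j$, where $T_j$ is the union of those TA components that happen to lie inside $Y_j$. Since $T_j$ is a finite union of regular sets, it is regular, and since $Y_j\cap X_0$ and $T_j$ are disjoint, $Y_j$ is regular if and only if $Y_j\cap X_0$ is regular. This already yields the first statement: if $Y_{j_1}$ is not regular then $Y_{j_1}\cap X_0$ is not regular, and in particular $Y_{j_1}\cap X_0\neq\emptyset$.

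For the second statement, assume $J$ is finite and argue by contradiction. Suppose $Y_j$ is regular for every $j\neq j_1$; then each $Y_j\cap X_0$ is regular. The family $\{Y_j\cap X_0\mid j\in J\}$ is a finite partition of the regular set $X_0$, so
\[
Y_{j_1}\cap X_0 \;=\; X_0 \setminus \bigcup_{j\neq j_1}(Y_j\cap X_0)
\]
is a Boolean combination of finitely many regular sets, hence regular. By the equivalence of the previous paragraph, $Y_{j_1}$ would then be regular, contradicting the hypothesis. Hence some $j_2\neq j_1$ has $Y_{j_2}$ non-regular, and applying the first statement to $j_2$ gives $Y_{j_2}\cap X_0\neq\emptyset$.

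The main obstacle, if any, lies in justifying Fact~1, i.e.\ that an arbitrary coding partition must coarsen $P(X)$; this should follow directly from the lattice structure given by Theorem~\ref{theorem.lattice} and the partial order on partitions recalled just before Corollary~\ref{corollary.coding}, under which $P(X)$ is by definition the top element. Once that is in hand, the rest is an elementary argument about Boolean closure of regular languages, used crucially only because the number of TA components is finite.
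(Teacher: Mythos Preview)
Your argument is correct and follows essentially the same approach as the paper: both rely on the fact that every $Y_j$ is a union of classes of $P(X)$ (your Fact~1), so that the portion of $Y_j$ outside $X_0$ is a finite union of regular TA components and hence regular, forcing any non-regularity of $Y_{j_1}$ to come from $Y_{j_1}\cap X_0$. For the second part the paper is slightly more direct, taking the complement in $X$ rather than in $X_0$ (i.e.\ $Y_{j_1}=X\setminus\bigcup_{j\neq j_1}Y_j$), but this is the same Boolean-closure idea and your version works just as well.
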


\begin{proof}
Let $P_C(X)=\{X_0,X_1,\dots, X_n\}$ be the regular and finite
canonical coding partition of $X$.
If, by contradiction, $Y_{j_1}\cap X_0 = \emptyset$ then, recalling
how $P_C(X)$ rises from $P(X)$ and recalling that $P(X)$ is
the finest coding partition of $X$, we see that $Y_{j_1}$
is a finite union of some of the regular
codes $\{X_1,\dots, X_n\}$ and so it is regular: a contradiction.
Then $Y_{j_1}\cap X_0\neq \emptyset$. Moreover if $J$ is finite
then if, by contradiction, all the $Y_j$ for  $j\neq j_1$ where
regular, then $Y_{j_1}$ where the complement, with respect to the
regular code $X$ of a regular code and so $Y_{j_1}$ where
regular against the hypothesis. Then there exists
$j_2\in J,\,j_2\neq j_1$ such that
also $Y_{j_2}$ is not regular and, by the first part of the proof,
$Y_{j_2}\cap X_0\neq \emptyset$.
\end{proof}

\begin{example}
Let $X$ be the regular \UD code $X=a^+b^+$. Then $X_0=X$ and
put $Y_1:=\{a^nb^n\mid n\geq1\},\,Y_2:=X\setminus Y_1$ we have
that $P=\{Y_1,Y_2\}$ is a coding partition of $X$ in two non-regular
classes.
\end{example}


\section{Free factorizations of a monoid} \label{section.3}

In this section the previous results are restated in an algebraic setting
making use of the free product of monoids.

Given a code $X\subseteq A^*$ we can study the properties of the monoid $M=X^*$.
On the contrary, if we start with a monoid $M\subseteq A^*$, we can study
the characteristic properties of the different sets $X\subseteq A^+$ of generators of $M$.
We recall that any submonoid $M$ of $A^*$ has a unique minimal set of
generators $X = (M\smallsetminus 1)\smallsetminus(M\smallsetminus 1)^2$, where $1$ is the empty word
(see \cite{BerstelPerrinReutenauer10}); in such a case we say that $X$ is the base
of $M$. In general we say that a code $X$ is {\em a base} if $X$ is the base of $X^*$.

\noindent It is natural to investigate how the properties of a partition
of a code are related to those of the monoids generated by the classes
of the partition.

Given a partition $P=\{X_i \mid i \in I\}$ of a code
$X\subseteq A^+$, the condition that every word $w \in X^+$ admits a
unique $P$-$factorization$ has a natural algebraic interpretation in
terms of free product of monoids.

Let $M$ be a monoid generated by submonoids
$M_\lambda, \lambda\in\Lambda$, and let $m\in M$. An expression
of $m$ of the form $\,m_1m_2\cdots m_r$, where $r\geq 0$, $1 \neq
m_i\in M_{\lambda_i}$, $\lambda_i \neq \lambda_{i+1}$,
is said in {\em reduced form} with respect to $M_\lambda$'s. By definition, $M$ is
the free product of the $M_\lambda$'s iff every element of
$M$ has an unique expression in reduced form with respect to $M_\lambda$'s and we
write $M=Fr_{\lambda\in\Lambda}\,\,M_\lambda$. In the finite case we also
write $M=M_{\lambda_1} * \cdots * M_{\lambda_n}$.

The previous results can be expressed then in the following form.
\begin{theorem}\label{theorem.free}
Let $X\subseteq A^+$ be a code, let $P=\{X_i \mid i \in I\}$ be
a partition of  $X$ and let $M=X^*$, $M_i= X_i^*$ with $i \in I$.
If $P$ is a coding partition of $X$ then $M$ is
the free product of the $M_i$'s. Conversely let $M$ be
the free product of the submonoids $M_i$'s, let $X_i$
be sets of generators of $M_i$ and let $X=\bigcup_{i\in I} X_i$.
Then $P=\{X_i \mid i \in I\}$ it is a coding partition of $X$.
\end{theorem}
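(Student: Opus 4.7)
The plan is to observe that a $P$-factorization of a word $w\in X^+$ encodes the same data as a reduced expression of $w$, viewed as an element of $M=X^*$, with respect to the generating submonoids $M_i=X_i^*$. Once this correspondence is made precise, each implication reduces to transporting the uniqueness statement from one side to the other.

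For the forward direction, I would first check that $M$ is generated by the $M_i$'s: every $m\in X^*$ is a product of elements of $X=\bigcup_i X_i$, and each such element lies in some $M_i$. A preliminary step is to show that $M_i\cap M_j=\{1\}$ for $i\neq j$, so that the index of a non-trivial element is unambiguous; this follows from the coding partition hypothesis, since any non-empty word in $X_i^*\cap X_j^*$ would admit two distinct $P$-factorizations (one block labelled $i$ and one block labelled $j$). Then, given two reduced expressions $m_1\cdots m_r = m'_1\cdots m'_s$ of the same element $m$, each factor $m_k\in M_{\lambda_k}\setminus\{1\}$ sits in $X_{\lambda_k}^+$, and the constraint $\lambda_k\neq\lambda_{k+1}$ makes $m=m_1\cdots m_r$ (and similarly the primed expression) into a $P$-factorization of the word $m$. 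Uniqueness of $P$-factorization then yields $r=s$ and $m_k=m'_k$ as words, hence as elements of $M_{\lambda_k}$.

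For the converse, I would first verify that $P=\{X_i\mid i\in I\}$ is actually a partition of $X$: in a free product one has $M_i\cap M_j=\{1\}$ for $i\neq j$, so the generating sets $X_i\subseteq M_i\setminus\{1\}$ are pairwise disjoint. Then, given $w\in X^+$, any $P$-factorization $w=z_1\cdots z_s$ with $z_j\in X_{\mu_j}^+$ and $\mu_j\neq\mu_{j+1}$ is automatically a reduced expression of $w$ over the $M_i$'s, since $z_j\in M_{\mu_j}\setminus\{1\}$. The uniqueness of reduced forms in a free product then gives uniqueness of the $P$-factorization, so $P$ is a coding partition.

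The main subtlety in both directions is the observation that $M_i\cap M_j=\{1\}$ for $i\neq j$, which is needed to make the translation between $P$-factorizations and reduced forms tight (otherwise the sequence of indices $\lambda_k$ attached to a reduced expression would not be recoverable from the sequence of factors). In the forward direction this has to be deduced from the coding hypothesis, while in the converse it is a standard general property of free products of monoids; the rest of the argument is a direct unwinding of the two definitions.
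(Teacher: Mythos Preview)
The paper does not actually prove this theorem: it introduces the notion of free product and then simply states that ``the previous results can be expressed in the following form,'' treating the equivalence between $P$-factorizations and reduced expressions as an immediate unpacking of the definitions. Your write-up is therefore more detailed than anything in the paper, and your overall strategy --- identify a $P$-factorization $w=z_1\cdots z_t$ with a reduced expression over the $M_i$'s and transport uniqueness across --- is exactly the intended one.

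There is one small imprecision in your forward direction. You argue that a non-trivial $w\in X_i^+\cap X_j^+$ would have ``two distinct $P$-factorizations (one block labelled $i$ and one block labelled $j$).'' But in the paper's definition a $P$-factorization is only the sequence of words $(z_1,\dots,z_t)$; uniqueness means $s=t$ and $z_k=u_k$, with no reference to labels. So the single-block factorization $(w)$ is \emph{one} $P$-factorization, regardless of whether you think of $w$ as lying in $X_i^+$ or $X_j^+$. The fix is immediate: look instead at $ww$, which admits the $P$-factorizations $(ww)$ (one block in $X_i^+$) and $(w,w)$ (first block in $X_i^+$, second in $X_j^+$), and these are genuinely different sequences. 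Alternatively, one can bypass the intersection statement entirely for the forward direction, since two reduced expressions that differ as word-sequences already give two $P$-factorizations; the disjointness $M_i\cap M_j=\{1\}$ is then only needed to pin down the indices, and that follows from the argument with $ww$. Apart from this point your proof is correct.
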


It's natural at this point to introduce the notion of free
factorizations of a monoid.
\begin{definition}
A family $\{M_\lambda\, |\, \lambda\in\Lambda\}$ of submonoids of $M$
is a {\em free factorization} of $M$ if $M$ is
the free product of the $M_\lambda$'s. The $M_\lambda$'s are called
the {\em free factors} of the free factorization; moreover we say that
a monoid $M$ is {\em freely indecomposable} if $M$
cannot be expressed as a free product of nontrivial monoids.
\end{definition}

We stress that a free factor is not, in general, a free monoid.

\begin{remark}\label{remark.UD.code}
We note that a monoid $M$ is freely indecomposable iff any set of generators
of $M$ is a totally ambiguous code. From another hand we have
that a code $X$ is \UD\, iff \, $X^*=Fr_{x\in X}\, \{x\}^*$
so, in particular, the monoid $X^*$ is free.
\end{remark}

The next proposition comes directly from the definition of free product
of monoids: it is the Corollary \ref{corollary.coding} restated in
terms of monoids.

\begin{proposition}\label{partition.UD.code}
Let $M=Fr_{\lambda\in\Lambda}\,\,M_\lambda$ and let $\{\Lambda_\mu\,|\,\mu\in \Gamma\}$
be a partition of $\Lambda$. Set $\forall\mu\in\Gamma,\,\, M_{\mu}$
the monoid generated by $\{M_\lambda\,|\,\lambda \in \Lambda_\mu\}$
then $M_{\mu}=Fr_{\lambda \in \Lambda_\mu}\,M_\lambda$
and $M=Fr_{\mu \in \Gamma}\,M_{\mu}$.
\end{proposition}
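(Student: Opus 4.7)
The plan is to translate the monoid statement into the language of coding partitions via Theorem~\ref{theorem.free}, and then pass from the fine partition indexed by $\Lambda$ to the coarser one indexed by $\Gamma$ using Corollary~\ref{corollary.coding}.

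Concretely, for each $\lambda\in\Lambda$ I would fix a generating set $X_\lambda\subseteq A^+$ of $M_\lambda$ (say the base of $M_\lambda$) and set $X=\bigcup_{\lambda\in\Lambda}X_\lambda$ together with $Y_\mu=\bigcup_{\lambda\in\Lambda_\mu}X_\lambda$ for each $\mu\in\Gamma$. Since distinct free factors of a free product meet only in the identity, the $X_\lambda$'s are pairwise disjoint, so $\{X_\lambda\mid\lambda\in\Lambda\}$ is a genuine partition of $X$. The second half of Theorem~\ref{theorem.free} applied to the hypothesis $M=Fr_{\lambda\in\Lambda}M_\lambda$ guarantees that this partition is coding. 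The partition $\{Y_\mu\mid\mu\in\Gamma\}$ is coarser, hence coding as well by Corollary~\ref{corollary.coding}. Since $M_\mu$ is the monoid generated by $\{X_\lambda^*\mid\lambda\in\Lambda_\mu\}$, we have $M_\mu=Y_\mu^*$, and a final application of the first half of Theorem~\ref{theorem.free} delivers $M=Fr_{\mu\in\Gamma}M_\mu$.

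For the inner equality $M_\mu=Fr_{\lambda\in\Lambda_\mu}M_\lambda$ I would check that $\{X_\lambda\mid\lambda\in\Lambda_\mu\}$ is a coding partition of $Y_\mu$ and invoke Theorem~\ref{theorem.free} once more. The key observation is that for any $w\in Y_\mu^+\subseteq X^+$ the unique global $P$-factorization of $w$ only uses classes indexed by $\Lambda_\mu$: writing $w=y_1\cdots y_k$ with $y_i\in X_{\lambda_i}$, $\lambda_i\in\Lambda_\mu$, and grouping consecutive factors belonging to the same class produces such a factorization, which must by uniqueness coincide with the global one. Hence $P$-factorizations with respect to the restricted and the global partitions coincide, so uniqueness transfers. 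I do not anticipate any real obstacle; the proposition is essentially Corollary~\ref{corollary.coding} re-expressed in terms of free products.
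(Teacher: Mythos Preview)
Your proposal is correct and matches the paper's own treatment: the paper does not give a detailed proof but simply remarks that the proposition ``comes directly from the definition of free product of monoids'' and ``is the Corollary~\ref{corollary.coding} restated in terms of monoids.'' You have spelled out precisely this translation via Theorem~\ref{theorem.free} and Corollary~\ref{corollary.coding}, and your handling of the inner equality $M_\mu=Fr_{\lambda\in\Lambda_\mu}M_\lambda$ (observing that the global unique reduced form of an element of $M_\mu$ necessarily uses only factors indexed by $\Lambda_\mu$) is sound.
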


Starting with an arbitrary family of submonoids of $A^*$, analogously
to what we have made with a code $X$, we can partition the family in
classes in such a way that the monoid generated by the family is
the free product of the monoids generated by each class of the
partition. On the contrary, if we have a monoid $M$, we can consider
the family of all the free factorizations of $M$ and define a partial
order on this family.

\begin{definition}
Let $F_1=\{M_\mu\,|\,\mu\in \Lambda_1\},\,\,F_2=\{M_\lambda\,|\,\lambda\in \Lambda_2\}$
be two free factorizations of a monoid $M$. We say that $F_1\leq F_2$
if there exists a partition $\{\Lambda_\mu\, |\, \mu\in \Lambda_1 \}$ of $\Lambda_2$
such that for each $\mu$, $M_\mu=Fr_{\lambda\in \Lambda_\mu}\,M_\lambda$.
\end{definition}

By Theorem \ref{theorem.lattice} and Theorem \ref{theorem.free}
we deduce the following theorem.

\begin{theorem}
Given a monoid $M$ the family of the free factorizations of $M$
is a complete lattice.
\end{theorem}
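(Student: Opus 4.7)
The strategy is to transport the complete lattice structure from the poset of coding partitions of a code to the poset of free factorizations of $M$, by identifying the two posets through the base of $M$. Concretely, let $X=(M\setminus 1)\setminus(M\setminus 1)^2$ be the unique base of $M$, so that $M=X^*$. I would establish an order-isomorphism between free factorizations of $M$ and coding partitions of $X$, and then invoke Theorem~\ref{theorem.lattice}.

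First, define a map $\phi$ from free factorizations to partitions of $X$: given $F=\{M_\lambda\mid\lambda\in\Lambda\}$, let $X_\lambda$ be the base of $M_\lambda$ and set $\phi(F)=\{X_\lambda\mid\lambda\in\Lambda\}$. The key point is that $\bigcup_\lambda X_\lambda=X$ and that the union is disjoint. For the inclusion $X\subseteq\bigcup_\lambda X_\lambda$: if $x\in X$ has reduced form $m_1\cdots m_r$ with $m_i\in M_{\lambda_i}\setminus 1$, then $r$ must equal $1$ (else $x=m_1\cdot(m_2\cdots m_r)$ is a nontrivial product in $M$, contradicting $x\in X$), and then $m_1$ is indecomposable in $M_{\lambda_1}$ (else decomposable in $M$), so $x\in X_{\lambda_1}$. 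For $\bigcup_\lambda X_\lambda\subseteq X$: a decomposition of some $x\in X_\lambda$ as a product of two nontrivial elements of $M$ would collapse, via uniqueness of the reduced form, to a decomposition inside $M_\lambda$, contradicting $x$ being in the base of $M_\lambda$. Disjointness of the union is immediate from uniqueness of the reduced form. Theorem~\ref{theorem.free} (converse direction) then guarantees that $\phi(F)$ is a coding partition of $X$.

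Next, define $\psi$ in the reverse direction: a coding partition $\{X_i\mid i\in I\}$ of $X$ is mapped to $\{X_i^*\mid i\in I\}$, which by Theorem~\ref{theorem.free} (forward direction) is a free factorization of $M$. Since each $X_i\subseteq X$ consists of elements indecomposable in $M$, $X_i$ is the base of $X_i^*$, and one checks that $\phi\circ\psi$ and $\psi\circ\phi$ are the identity. It remains to match the orders: $F_1\leq F_2$ in the free-factorization order means each factor of $F_1$ is the free product of a subfamily of factors of $F_2$; taking bases, this says each class of $\phi(F_1)$ is the union of the corresponding classes of $\phi(F_2)$, i.e.\ $\phi(F_1)\leq\phi(F_2)$ as coding partitions. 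Hence $\phi$ is an order-isomorphism, and completeness of the lattice of free factorizations follows from Theorem~\ref{theorem.lattice}. The only nontrivial step is the identification $\bigcup_\lambda X_\lambda=X$ with disjointness, which is where uniqueness of the reduced form in the free product carries the weight; the remainder is essentially definition-chasing through Theorem~\ref{theorem.free}.
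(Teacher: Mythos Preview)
Your argument is correct and follows exactly the route the paper intends: the paper simply says the theorem is deduced from Theorem~\ref{theorem.lattice} and Theorem~\ref{theorem.free}, and you have spelled out the implicit order-isomorphism between free factorizations of $M$ and coding partitions of its base. The key step you prove inline---that the base of a free product is the disjoint union of the bases of its factors---is stated and proved later in the paper as Lemma~\ref{lemma.base} (and its corollary), so you have in effect anticipated that lemma.
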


As in the case of the canonical partition of a code, {\em the} finest free factorization
of a monoid $M$ is called the {\em characteristic} free factorization of $M$ and it is
denoted by $\mathcal{F}(M)$ or, if we want to make the free factors explicit,
 $\mathcal{F}(M)=Fr_{\lambda\in\Lambda}\,\,M_\lambda$.

Now let $M_0$ be the monoid generated by all the free factors of $\mathcal{F}(M)$ having only one
generator. The monoid $M_0$ is then a free monoid and it is called the {\em free
component} of $M$. From $\mathcal{F}(M)$ one then derives another decomposition of $M$

\[
\mathcal{F}_C(M)=M_0*Fr_{\substack{\lambda\in\Lambda
}}\,\,M_\lambda,
\]
where the $M_\lambda$\,'s
are the free factors of $\mathcal{F}(M)$ having more then one generator. If there
are such monoids $M_\lambda$ then they are not free and
they are, of course, freely
indecomposable. They are called the {\em freely
indecomposable components} of $M$. By Proposition \ref{partition.UD.code} we have that
$\mathcal{F}_C(X)$ is a free factorization of $M$ (indeed $\mathcal{F}_C(M)\leq \mathcal{F}(M)$) and it is
called the {\em canonical free factorization} of $M$: it defines a
{\em canonical decomposition} of a monoid $M$ in at most one
free component and a (possibly empty) set  of freely indecomposable components.

\begin{example}\label{free.factorization.A^*}
Let $A=\{a_1, a_2, \dots \}$. Then $\mathcal{F}(A^*)=(a_1^*)*
(a_2^*)* \cdots $, and $\mathcal{F}_C(A^*)=\{A^*\}$.
Then the poset of the free
factorizations of $A^*$ are in bijection with the poset of
the alphabet $A$.
\end{example}

Already in \cite{BealBurderiRestivo09}, the following
equivalent formulation of Theorem~\ref{theorem.regular} is given.

\begin{theorem}
Any regular submonoid $M\subseteq A^*$ admits a canonical
decomposition into a free product of at most one regular free
submonoid and finitely many (possibly zero) regular freely
indecomposable submonoids.
\end{theorem}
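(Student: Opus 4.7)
The plan is to deduce this equivalent formulation directly from Theorem \ref{theorem.regular} via the dictionary between coding partitions of codes and free factorizations of monoids provided by Theorem \ref{theorem.free} and Remark \ref{remark.UD.code}.

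First I would take the regular submonoid $M\subseteq A^*$ and pass to its base $X=(M\setminus 1)\setminus(M\setminus 1)^2$, observing that regularity of $M$ implies regularity of $X$ since the definition of the base uses only Boolean operations and concatenation, all of which preserve regularity. Thus $X$ is a regular code with $X^*=M$.

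Next I would invoke Theorem \ref{theorem.regular} applied to $X$: the canonical coding partition $P_C(X)=\{X_0,X_1,\dots,X_n\}$ is finite, and each class $X_i$ is a regular set. Here $X_0$ is the unambiguous (\UD) component and $X_1,\dots,X_n$ are the \emph{TA} components (all freely indecomposable in the sense of Remark \ref{remark.UD.code}). By Theorem \ref{theorem.free}, since $P_C(X)$ is a coding partition, the submonoids $M_i=X_i^*$ satisfy
\[
M=X^*=M_0*M_1*\cdots *M_n.
\]

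The last step is to identify the nature of each factor. Since $X_0$ is \UD, Remark \ref{remark.UD.code} tells us $M_0=X_0^*$ is a free monoid, and it is regular because $X_0$ is. For $i\geq 1$, $X_i$ is a \emph{TA} code (i.e.\ $P(X_i)=\{X_i\}$), hence by Remark \ref{remark.UD.code} the monoid $M_i=X_i^*$ is freely indecomposable; again $M_i$ is regular as $X_i$ is. This yields exactly the canonical decomposition $\mathcal{F}_C(M)$ with at most one regular free factor $M_0$ and finitely many regular freely indecomposable factors $M_1,\dots,M_n$.

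I do not anticipate a serious obstacle: the statement is essentially a translation. The only subtle point is checking that the base of a regular monoid is regular (so Theorem \ref{theorem.regular} can be applied) and that the canonical coding partition of $X$ corresponds term by term to the canonical free factorization of $M=X^*$; both facts are straightforward from the definitions given in Section \ref{section.3}.
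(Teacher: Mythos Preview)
Your proposal is correct and is exactly the intended argument: the paper itself gives no separate proof but simply presents the statement as ``the following equivalent formulation of Theorem~\ref{theorem.regular}'', so the deduction via the base of $M$, Theorem~\ref{theorem.free}, and the identification of the \UD/$TA$ components with the free/freely indecomposable factors is precisely the translation the paper has in mind. The only point you might tighten is the freely-indecomposable step: to conclude that $M_i=X_i^*$ is freely indecomposable from $X_i$ being $TA$, one implicitly uses Lemma~\ref{lemma.base} (a putative decomposition $M_i=N_1*N_2$ would give a nontrivial coding partition of the base $X_i$), but this is routine.
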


\begin{example}
Let $A=\{a,b,c,d\}$ and let $X\subseteq A^+$ be the following
regular code: $X=a+bb+c+ad^*b+bc^*bb$. \\In \cite{BealBurderiRestivo09}
it is shown that $P_C(X)=\{X_0, X_1\}$ where $X_0=ad^+b$ and $X_1=a+ab+bb+c+bc^*bb$.
Then the canonical decomposition of the regular submonoid
$X^*$ is $X^*=(X_0^*)*(X_1^*)$.
\end{example}


\section{Full monoids and maximal codes} \label{section.4}

Using ideas of previous section we introduce a partial order
in the family of the submonoids of $A^*$. We will prove that,
in this poset, there exist maximal elements. We call this maximal
elements \emph{full} monoids and we will say that a code is \emph{maximal}
if it is the base of a full monoid. We show that this definition
of maximality extends that concerning
\UD codes and, with Theorem~\ref{UD.max.characterization}, we will give
a characterization of maximal \UD codes depending only
on the monoid they generate.

\begin{definition}

Let $M,N\subseteq A^*$ be monoids we say that $M\preceq N$
if there exists a monoid $L\subseteq A^*$ such that $N=M*L$.
\end{definition}

\begin{proposition}
The relation $\preceq$ is a partial order on the set of
submonoids of $A^*$.
\end{proposition}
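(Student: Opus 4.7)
The plan is to verify the three axioms of a partial order in turn. Reflexivity comes for free by taking $L = \{1\}$, the trivial monoid, since the free product $M * \{1\}$ equals $M$, giving $M \preceq M$.

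For antisymmetry, I would observe that $M \preceq N$ forces the set-theoretic inclusion $M \subseteq N$: indeed $M$ appears among the free factors of $N = M * L$, hence sits inside $N$ as a subset of $A^*$. Antisymmetry of $\preceq$ then reduces to antisymmetry of $\subseteq$, which is immediate: if $M \preceq N$ and $N \preceq M$, then $M \subseteq N \subseteq M$, so $M = N$.

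The substantive step is transitivity. Assume $N = M * L_1$ and $P = N * L_2$. Substituting, $P$ becomes the iterated expression $M * L_1 * L_2$, and I need to produce a single monoid $L \subseteq A^*$ with $P = M * L$. I would let $L$ be the submonoid of $P$ (hence of $A^*$) generated by $L_1 \cup L_2$, and invoke Proposition~\ref{partition.UD.code} applied to the index set $\Lambda = \{0,1,2\}$ with $M_0 = M$, $M_1 = L_1$, $M_2 = L_2$, partitioned as $\{\{0\},\{1,2\}\}$. That proposition delivers simultaneously the internal free product $L = L_1 * L_2$ and the free factorization $P = M * L$, which is precisely $M \preceq P$. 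The main obstacle is this re-association of a three-fold free product, and Proposition~\ref{partition.UD.code} is tailor-made for it, so once invoked the conclusion follows with no further calculation.
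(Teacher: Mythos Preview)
Your argument is sound, and both differences from the paper are worth noting.

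For antisymmetry you take a cleaner route than the paper: you observe that $M \preceq N$ forces the set-theoretic inclusion $M \subseteq N$ (since the internal free product $M*L$ contains $M$) and then invoke antisymmetry of $\subseteq$. The paper instead substitutes to obtain $M = M * M' * N'$ and argues that a nontrivial element of $M'$ would admit two distinct reduced forms, forcing $M'=N'=\{1\}$. Your version is shorter and avoids any direct reasoning about reduced forms.

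For transitivity, both you and the paper ultimately rely on associativity of the internal free product, but your appeal to Proposition~\ref{partition.UD.code} has a small wrinkle. That proposition takes as \emph{hypothesis} a free factorization indexed by $\Lambda=\{0,1,2\}$ and lets you coarsen it along a partition. What you are actually given is the iterated two-fold product $P=(M*L_1)*L_2$; passing from this to the genuine three-fold free product $P=Fr_{\{0,1,2\}}(M,L_1,L_2)$ is precisely the associativity step, and it is the \emph{converse} direction (refining a free factorization) of what Proposition~\ref{partition.UD.code} states. So as written you are invoking the proposition with an unestablished hypothesis. The paper sidesteps this by simply writing $(L*L')*M'=L*(L'*M')$ and treating associativity as known. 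The fix on your side is immediate---either verify the three-fold decomposition directly from uniqueness of reduced forms, or just assert associativity as the paper does---so this is a presentational slip rather than a real gap.
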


\begin{proof}
We need to prove that $\preceq$ is transitive and antisymmetric.
If $L\preceq M$ and $M\preceq N$ then $\exists L',M'$ such that
$M=L*L'$ and $N=M*M'$. Then $N=(L*L')*M'=L*(L'*M')$ and so
$N\preceq L$. Now let $M\preceq N$ and $N\preceq M$ so
$M=N*N'$ and $N=M*M'$ for some monoids $M',N'$.
Then $M=M*M'*N'$ thus $M',N'$ are trivial monoids
and so $M=N$.
\end{proof}

The first question is, given a monoid $N$, if there exists
a monoid $M$ with $N\subseteq M$ and $M$ maximal with respect to
the partial order $\preceq$.
\medskip

To answer to the previous question we first prove the
following lemma.

\begin{lemma} \label{lemma.base}
Let $M=M_1*M_2$ and let $X,X_1,X_2$ be the base of $M,M_1,M_2$ respectively.
Then $X=X_1\cup X_2$.
 \end{lemma}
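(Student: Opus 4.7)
The plan is to prove the two inclusions $X_1\cup X_2\subseteq X$ and $X\subseteq X_1\cup X_2$ separately, relying on the characterization (stated in the paper) that the base of a submonoid $N$ of $A^*$ is $(N\setminus 1)\setminus (N\setminus 1)^2$, i.e.\ the set of non-identity elements of $N$ that cannot be written as a product of two non-identity elements of $N$.

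For the inclusion $X_1\cup X_2\subseteq X$, suppose $x\in X_1$ (the case $x\in X_2$ being symmetric). Since $x\in M_1\subseteq M$, I only need to show that $x$ cannot be written as a product of two non-identity elements of $M$. Assuming $x=mn$ with $m,n\in M\setminus 1$, I would expand $m$ and $n$ in their unique reduced forms with respect to $M_1,M_2$, concatenate them (taking care to merge the two middle factors if they happen to lie in the same $M_i$), and appeal to the uniqueness of the reduced form: because $x\in M_1$ already admits itself as a reduced form of length $1$, every factor appearing in $m$ and $n$ must in fact lie in $M_1$. Hence $m,n\in M_1\setminus 1$, contradicting the assumption that $x$ is in the base $X_1$ of $M_1$.

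For the inclusion $X\subseteq X_1\cup X_2$, take $x\in X$ and write $x=m_1m_2\cdots m_r$ in reduced form with respect to $M_1,M_2$. The case $r=0$ is excluded because $x\neq 1$. If $r\geq 2$, then $m_1$ and $m_2\cdots m_r$ are both non-identity elements of $M$ whose product is $x$, contradicting the membership of $x$ in the base $X$. Therefore $r=1$, so $x\in M_i\setminus 1$ for some $i\in\{1,2\}$. Finally, if $x$ were not in $X_i$, one could write $x=ab$ with $a,b\in M_i\setminus 1\subseteq M\setminus 1$, again contradicting $x\in X$; hence $x\in X_i\subseteq X_1\cup X_2$.

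The only real subtlety is in the first inclusion, where care must be taken when concatenating two reduced forms, since their adjacent endpoints may belong to the same factor and have to be amalgamated before the uniqueness of the reduced form can be invoked. Beyond this bookkeeping step the argument is a direct unwinding of the definitions of \emph{base} and \emph{free product}, and should occupy only a few lines.
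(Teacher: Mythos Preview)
Your proof is correct and follows essentially the same approach as the paper: both arguments hinge on the uniqueness of the reduced form with respect to $M_1,M_2$ to force any decomposition of an element of $X_1$ (resp.\ $X_2$) back into $M_1$ (resp.\ $M_2$), thereby contradicting minimality. The only cosmetic difference is that the paper first observes $X_1\cup X_2$ is a generating set of $M$ (so $X\subseteq X_1\cup X_2$ is immediate from minimality of the base) and then argues the reverse inclusion by contradiction, whereas you verify both inclusions directly via the characterization $X=(M\setminus 1)\setminus(M\setminus 1)^2$; the substance is the same.
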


\begin{proof}
Since $M=M_1*M_2$ and $X_1\,,X_2$ are the bases of $M_1$ and $M_2$ respectively,
it is clear that $X_1\cup X_2$ is a set of generators of $M$.
Let, by contradiction, $X\subsetneq X_1\cup X_2$ and let
$x'\in (X_1\cup X_2)\smallsetminus X$. We can assume
that $x'\in X_1$. Since $X$ is a set of generators of $M$, $x'=x_1x_2\cdots x_n$ with $x_i\in X$.
But $x'\in M_1$ and, by the uniqueness of the reduced form with
respect to $M_1$ and $M_2$, we have $x_i\in M_1,\,\, \forall\,\, 1\leq i\leq n$,
and so $x_i\in X_1,\,\, \forall\,\, 1\leq i\leq n$.
This shows that $X_1\smallsetminus \{x'\}$ is a set of generators of $M_1$:
a contradiction.
Thus $X_1\cup X_2$ is a minimal set of generators of $M$
and we have the thesis.
\end{proof}

\noindent As an obvious generalization we have the following
\begin{corollary}
 Let $M=Fr_{\lambda\in \Lambda}\,M_\lambda$ and let
 $X_\lambda,\,\lambda\in \Lambda$ and $X$ be the bases of
 $M_\lambda,\,\lambda\in \Lambda$ and $M$ respectively.
 Then $X=\cup_{\lambda\in \Lambda}\,\,X_\lambda$.
\end{corollary}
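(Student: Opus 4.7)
The plan is to repeat the argument of Lemma~\ref{lemma.base} almost verbatim; the proof is insensitive to whether $\Lambda$ is finite or infinite, because every element of a free product admits a reduced form of finite length, and the bookkeeping in the binary case uses no special feature of having only two factors.

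Setting $Y := \bigcup_{\lambda\in\Lambda} X_\lambda$, I would first verify that $Y$ generates $M$: any $m\in M$ has a reduced form $m = m_1 \cdots m_r$ with $1\neq m_k \in M_{\lambda_k}$, and each such $m_k$ is a product of elements of $X_{\lambda_k}\subseteq Y$, so $m\in Y^*$. At the same time $X\subseteq Y$, since any $x\in X$ is indecomposable in $M$: being a product of elements of $Y$, such a product must have length one (otherwise $x$ would factor as a product of two non-identity elements of $M$, contradicting $x$ belonging to the base).

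Next I would show that $Y$ is a minimal generating set of $M$. Suppose, for contradiction, that some $x' \in Y$ lies in $M\setminus Y\setminus\{x'\}$\,—\,concretely, assume $X\subsetneq Y$ and pick $x' \in Y\setminus X$, say $x' \in X_{\lambda_0}$. Then $x' = x_1 x_2 \cdots x_n$ with $x_i \in X \subseteq Y$. Applying the uniqueness of the reduced form (with respect to the $M_\lambda$'s) to both sides, I would conclude that every $x_i$ belongs to $M_{\lambda_0}$: the key point, special to submonoids of $A^*$, is that a product of non-empty words is non-empty, so when the reduced forms of the $x_i$'s are concatenated and adjacent factors with equal label are merged, no label can ever be annihilated; hence if the end result is a single factor in $M_{\lambda_0}$, every original label must already have been $\lambda_0$. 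Freeness then gives $M_{\lambda_0}\cap M_\mu = \{1\}$ for $\mu\neq\lambda_0$, so $x_i \in Y\cap M_{\lambda_0} = X_{\lambda_0}$ for all $i$. Consequently $X_{\lambda_0}\setminus\{x'\}$ generates $M_{\lambda_0}$, contradicting that $X_{\lambda_0}$ is the base of $M_{\lambda_0}$.

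Uniqueness of the base of a submonoid of $A^*$, cited in Section~\ref{section.3}, then yields $X = Y = \bigcup_{\lambda\in\Lambda} X_\lambda$. I do not expect a genuine obstacle here: the only delicate point, and the reason the authors call the statement ``obvious,'' is the observation that labels in the reduced form cannot cancel upon concatenation, which follows from $A^*$ having no non-trivial invertible elements and is exactly the same lemma used tacitly in the binary case.
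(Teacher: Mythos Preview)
Your argument is correct and is exactly the approach the paper intends: the corollary is stated as an ``obvious generalization'' of Lemma~\ref{lemma.base} with no separate proof, and you have simply carried that proof through for arbitrary $\Lambda$, correctly isolating the one point that matters (no non-trivial units in $A^*$, so labels in a reduced form cannot cancel upon concatenation). The phrase ``lies in $M\setminus Y\setminus\{x'\}$'' is garbled, but your subsequent clarification (``assume $X\subsetneq Y$ and pick $x'\in Y\setminus X$'') makes the intended contradiction hypothesis clear.
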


We note that without Lemma \ref{lemma.base}, by Theorem
\ref {theorem.free}, we only say that $Y:=X_1\cup X_2$ is a
set of generators of $M$ and that $P=\{X_1, X_2\}$
is a coding partition of $Y$. Lemma \ref{lemma.base}
says that $Y$ is the base of $M$.

Now we can prove the following theorem.

\begin{theorem}
Any submonoid $M\subseteq A^*$ is contained in a submonoid $N\subseteq A^*$,
which is maximal with respect to $\preceq$ and such that $M\preceq N$.
\end{theorem}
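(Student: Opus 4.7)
The plan is to apply Zorn's lemma to the set
\[
\mathcal{P} := \{N \subseteq A^* \text{ submonoid} \mid M \preceq N\},
\]
ordered by the restriction of $\preceq$. This set is nonempty since $M \in \mathcal{P}$. The essential task is to verify that every chain in $\mathcal{P}$ admits an upper bound in $\mathcal{P}$; granted this, Zorn produces a maximal element $N^* \in \mathcal{P}$, and transitivity of $\preceq$ shows that if $N^* \preceq K$ for any submonoid $K \subseteq A^*$, then $K \in \mathcal{P}$ and so $K = N^*$. Thus $N^*$ is maximal in the full poset of submonoids of $A^*$ and satisfies $M \preceq N^*$, which is precisely what is required.

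Given a chain $\{N_\alpha\}_{\alpha \in I}$ in $\mathcal{P}$, I would first observe that $N_\alpha \preceq N_\beta$ entails $N_\alpha \subseteq N_\beta$ (take the $L$-part in the reduced form to be empty), so the chain is increasing under inclusion and its union $N := \bigcup_\alpha N_\alpha$ is itself a submonoid of $A^*$. To prove $M \preceq N$, for each $\alpha$ I would pick $L_\alpha$ with $N_\alpha = M * L_\alpha$; by Lemma~\ref{lemma.base}, such an $L_\alpha$ is uniquely determined, for its base is forced to be the base of $N_\alpha$ minus the base of $M$. When $\alpha \leq \beta$ in the chain, writing $N_\beta = N_\alpha * L'$ for a suitable $L'$ yields $N_\beta = M * L_\alpha * L'$, and Proposition~\ref{partition.UD.code} (grouping the last two factors) regroups this as $N_\beta = M * (L_\alpha * L')$. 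Uniqueness of the complementary factor then forces $L_\beta = L_\alpha * L'$, whence in particular $L_\alpha \subseteq L_\beta$. I would then set $L := \bigcup_\alpha L_\alpha$, itself a submonoid of $A^*$, and propose it as the complementary free factor of $M$ inside $N$.

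The main obstacle is verifying that indeed $N = M * L$. Existence of a reduced form is easy: any element of $N$ lies in some $N_\alpha = M * L_\alpha$, so it admits an alternating reduced form with factors from $M$ and $L_\alpha \subseteq L$, and in particular $M \cup L$ generates $N$. Uniqueness of the reduced form is the subtle point, and here I would use the finiteness of any single reduced form: the (finitely many) $L$-factors occurring in two given reduced forms of the same element all lie in some common $L_\gamma$ (the chain $\{L_\alpha\}$ is directed by inclusion), and then uniqueness of reduced forms in the free product $N_\gamma = M * L_\gamma$ forces the two decompositions to coincide. With $N = M * L$ established, $N \in \mathcal{P}$ is the desired upper bound and Zorn's lemma delivers the maximal element asserted by the theorem.
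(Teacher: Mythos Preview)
Your overall approach matches the paper's: apply Zorn's lemma to $\mathcal{P}=\{N\mid M\preceq N\}$, take the union of a chain as the candidate upper bound, and verify the free-product structure by pushing any two competing reduced forms down into a single member of the chain where uniqueness is already known. The paper builds the complementary factor from bases (it takes $H_\lambda$ generated by $Y\setminus X_\lambda$), whereas you take $L=\bigcup_\alpha L_\alpha$; by Lemma~\ref{lemma.base} these coincide, so that difference is cosmetic.

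There is, however, a genuine slip at the end. You prove $N=M*L$ and conclude that ``$N\in\mathcal{P}$ is the desired upper bound''. But $N\in\mathcal{P}$ only says $M\preceq N$; Zorn's lemma requires $N_\alpha\preceq N$ for every $\alpha$ in the chain, and that does not follow from $M\preceq N$ alone. The repair is immediate and needs nothing new: run your argument with an arbitrary $N_\alpha$ in place of $M$. For $\beta\geq\alpha$ write $N_\beta=N_\alpha*L'_\beta$ (unique, as you argued), observe that the $L'_\beta$ increase, set $L'=\bigcup_{\beta\geq\alpha}L'_\beta$, and your finiteness-of-reduced-forms argument gives $N=N_\alpha*L'$. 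Then $M\preceq N_\alpha\preceq N$ delivers $N\in\mathcal{P}$ as a consequence. This is exactly how the paper organises the proof: it first shows $M_\lambda\preceq N$ for each $\lambda$, and only afterwards deduces $M\preceq N$.
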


\begin{proof}
We will make use of Zorn's lemma.
Let $\mathfrak{F}$ be the family of all the submonoids $P\subseteq A^*$,
ordered by $\preceq$, such that
$M\preceq P,\,\forall P\in \mathfrak{F}$ and let
$\{M_\lambda\,\,|\lambda\in \Lambda\}$ be a chain
in $\mathfrak{F}$. If $\lambda < \gamma$
then there exists a submonoid $H_{\lambda,\gamma}\subseteq A^*$
such that $M_\gamma = M_\lambda*H_{\lambda,\gamma}$
and so if we call $X_\gamma,\,X_\lambda$ and $X_{\lambda,\gamma}$
the bases of $M_\gamma,\, M_\lambda,\,H_{\lambda,\gamma}$
respectively, by Lemma \ref{lemma.base}, $X_\gamma = X_\lambda\cup X_{\lambda,\gamma}$
and then $X_\lambda\subsetneq X_\gamma$.
Now, $\forall\,\,\lambda\in\Lambda$, let $X_\lambda$ the base of $M_\lambda$,
$Y:=\cup_{\lambda\in \Lambda}\,\,X_\lambda$
and let $N$ be the monoid generated by $Y$. We show that $M_\lambda \preceq N,\,\,\forall\,\,\lambda\in\Lambda$.
Let $Z_\lambda:=Y\setminus X_\lambda$ and let $H_\lambda$
the submonoid generated by $Z_\lambda$. We will prove that $N=M_\lambda * H_\lambda$.
Let $m\in N$ and let us suppose, by contradiction, that
$m$ has two different expressions in reduced form with respect to
$M_\lambda, H_\lambda$ so
$m=\,m_1m_2\cdots m_r=\,m'_1m'_2\cdots m'_s$
with $r,s\geq1$. Since $N$ is generated by $Y$
then $m=y_1y_2\cdots y_h=y'_1y'_2\cdots y'_k$ for certain $y_i,y'_j\in Y$
and, since the two expressions in reduced form with respect to
$M_\lambda, H_\lambda$ are different,
$\exists y\in \{y_1, y_2,\dots, y_h, y'_1, y'_2,\dots, y'_k\}$ such that
$y\notin X_\lambda$.
Let $\lambda_1\in \Lambda$ such that $\lambda_1>\lambda$ and
$y_i,y'_j\in X_{\lambda_1},\,\,\forall \,\,i,j$.
Then $M_{\lambda_1}= M_\lambda * H_{\lambda,\lambda_1}$ for a certain
$H_{\lambda,\lambda_1}\subseteq A^*$.
Since $m,m_i,m'_j\in M_{\lambda_1},\,\,\forall \,\,i,j$, then
the two different expressions of $m$ in reduced form with respect to
$M_\lambda, H_\lambda$ are still two different expressions
in reduced form with respect to $M_\lambda,\, H_{\lambda,\lambda_1}$.
This contradiction shows that $N=M_\lambda * H_\lambda$
and thus $M_\lambda \preceq N,\,\,\forall\,\,\lambda\in\Lambda$.
Since $M\preceq M_\lambda,\,\,\forall\,\,\lambda\in\Lambda$
then $M\preceq N$ so $N\in\mathfrak{F}$ and it is a upper bound
for the chain $\{M_\lambda\,\,|\lambda\in \Lambda\}$.
Invoking Zorn's lemma we have the thesis.
\end{proof}

\begin{remark}
By Example \ref{free.factorization.A^*} we see that if $M$ is not generated by a subset
of the alphabet $A$, then the maximal monoid $N$ which
the previous theorem refers to, is properly contained in $A^*$ i.e.
$M\preceq N\subsetneq A^*$.
\end{remark}

We give now the following definition.

\begin{definition}
We say that a submonoid $M$ of $A^*$ is {\em full} if it is
maximal with respect to the partial \\order $\preceq$.
\end{definition}

\begin{remark}
From the definition we have that if $M'\subseteq M$ and
$M'$ is full then also $M$ is full.
\end{remark}

A first statement on full monoids is
given by the following proposition.

\begin{proposition} \label{max.full}
Let $M\subseteq A^*$ be a monoid. If $M$ is maximal with respect to the
inclusion order $\subseteq$ then it is full.
\end{proposition}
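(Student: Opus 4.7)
The proof I would give rests on a single tautological observation: the relation $\preceq$ is a sub-relation of $\subseteq$, i.e., $M \preceq N$ implies $M \subseteq N$. To see this, unpack the definition: $M \preceq N$ means $N = M * L$ for some submonoid $L \subseteq A^*$, and by the very definition of the free product, each free factor embeds as a submonoid of the free product. Hence $M \subseteq N$ whenever $M \preceq N$.

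Given this, the proposition becomes a short formal consequence of comparing the two orders. Suppose $M$ is maximal with respect to $\subseteq$, and pick any submonoid $N \subseteq A^*$ with $M \preceq N$. By the observation above $M \subseteq N$, and maximality of $M$ in the coarser order $\subseteq$ forces $N = M$. This holds for every such $N$, which is precisely the statement that $M$ is maximal with respect to $\preceq$, i.e., full.

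There is no real obstacle in this argument: it is a general fact that maximality with respect to a coarser partial order implies maximality with respect to any refinement of it, and here it is applied to the specific pair $(\preceq, \subseteq)$. The only piece of real content is the immediate consequence of the definition of the free product, namely that free factors sit inside the free product as submonoids; after that observation, the whole proposition is a one-line inference.
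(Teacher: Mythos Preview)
Your core observation—that $M \preceq N$ implies $M \subseteq N$, so $\preceq$ is a refinement of $\subseteq$—is correct, and the general poset principle you invoke is valid. The difficulty lies in what the paper means by ``maximal with respect to $\subseteq$''. Taken literally on the poset of \emph{all} submonoids of $A^*$, the unique $\subseteq$-maximal element is $A^*$ itself, and the proposition would be a triviality. But the remark immediately following the proposition makes clear that the intended meaning is \emph{maximal proper} submonoid: such $M$ are exactly the monoids $A^{*}\smallsetminus\{a\}$ for some letter $a\in A$.

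Under that reading your argument has a gap. From $M \prec N$ you correctly deduce $M \subsetneq N$, but maximality of $M$ among \emph{proper} submonoids then only forces $N = A^*$, not $N = M$; you still have to exclude the possibility $A^{*} = M * L$ with $L$ nontrivial. The paper's proof is designed precisely to sidestep this case: starting from $N = M * M_1$ with $M_1$ nontrivial and $x$ in the base of $M_1$, it forms the intermediate monoid $M * (x^2)^*$ and obtains the strict chain $M \subsetneq M * (x^2)^* \subsetneq N$. The middle term is then a \emph{proper} submonoid strictly containing $M$, which contradicts maximality among proper submonoids irrespective of whether $N = A^*$.

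Your route can be salvaged with one extra line. By the lemma on bases of a free product, $A^{*} = M * L$ would force the alphabet $A$ to equal the union of the bases of $M$ and $L$; but the base of $M = A^{*}\smallsetminus\{a\}$ contains words of length~$2$ (for instance $a^2$), so no such decomposition exists. With that addition your argument is complete and conceptually cleaner than the paper's, but as written it does not yet close the case $N = A^*$.
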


\begin{proof}
We will prove that if $M$ is not full then it is not maximal with
respect to the inclusion order $\subseteq$.
If $M$ is not full then there exist a monoid $N\subseteq A^*$ and a
non trivial monoid $M_1\subseteq A^*$ such that $N=M*M_1$.
Let $X$ the base of $M_1$, $x\in X$ and let $M_2$ be
the monoid $(x^2)^*$. Then we have
$M\subsetneq M*M_2\subsetneq N$.
\end{proof}

We recall that the submonoids of $A^*$ maximal with respect to the
inclusion order $\subseteq$ are ``few'': in fact it is easy to see that
a submonoid $M$ of $A^*$ is maximal with respect to the
inclusion order $\subseteq$ iff $M=A^*\smallsetminus\{a\}$ for a certain $a\in A$.

\medskip

A \UD code $X\subseteq A^+$ is said to be a {\em maximal} \UD code
if $X$ is not properly contained in any other \UD
code over $A$.

\medskip

Now we extend the notion of maximality
to codes that are not \UD.
\begin{definition}
A code $X\subseteq A^+$ is said {\em maximal}
if the monoid $X^*$ is full.
\end{definition}

The next theorem shows how this notion generalizes
that of maximality given for \UD codes.

\begin{theorem}\label{maximaUD.maximal}
Let $X$ be a \UD code. Then $X$ is a maximal \UD code iff
$X^*$ is a full monoid.
\end{theorem}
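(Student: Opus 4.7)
The plan is to prove both implications by contrapositive, using Remark~\ref{remark.UD.code} (which identifies UD codes with those whose generated monoid is free, i.e., $X^* = Fr_{x\in X}\{x\}^*$) together with the free-product machinery from Section~\ref{section.3}.

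For the direction ``$X$ maximal UD $\Rightarrow$ $X^*$ full'', I would argue by contradiction. Suppose $X^*$ is not full; then there exists a nontrivial submonoid $L\subseteq A^*$ with $N := X^**L$ strictly containing $X^*$. Let $y$ be any element of the base of $L$. First I would verify $y\notin X$: indeed $y\in L\setminus\{1\}$, and in a free product $X^*\cap L=\{1\}$, so $y\notin X^*$. Next, I would show that the submonoid $M'$ of $N$ generated by $X\cup\{y\}$ is exactly the free product $X^**\{y\}^*$. Since $\{y\}^*\subseteq L$, any word $a_1b_1a_2b_2\cdots$ in $M'$ with $a_i\in X^*\setminus\{1\}$ and $b_i\in\{y\}^*\setminus\{1\}$ is already in reduced form with respect to the ambient factorization $N=X^**L$ (the $b_i$ lie in $L\setminus\{1\}$); hence uniqueness of the reduced form in $N$ transfers to uniqueness of the reduced form of elements of $M'$ with respect to $X^*$ and $\{y\}^*$. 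Combining this with the freeness $X^*=Fr_{x\in X}\{x\}^*$ (Remark~\ref{remark.UD.code}) and Proposition~\ref{partition.UD.code}, I obtain $M'=Fr_{z\in X\cup\{y\}}\{z\}^*$, so by Remark~\ref{remark.UD.code} the code $X\cup\{y\}$ is UD. This properly extends $X$, contradicting maximality.

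For the direction ``$X^*$ full $\Rightarrow$ $X$ maximal UD'', again by contradiction, suppose $X\subsetneq X'$ with $X'$ UD. Set $Y=X'\setminus X\neq\emptyset$. By Remark~\ref{remark.UD.code}, $(X')^* = Fr_{z\in X'}\{z\}^*$, and applying Proposition~\ref{partition.UD.code} to the partition $\{X,Y\}$ of $X'$ yields $(X')^* = X^* * Y^*$, where $Y^*$ is nontrivial. Thus $X^*\preceq (X')^*$ strictly, contradicting fullness of $X^*$.

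The main obstacle is the first direction, specifically the verification that the submonoid of $N=X^**L$ generated by $X\cup\{y\}$ coincides with the ``internal'' free product $X^**\{y\}^*$; everything else is a direct application of Remark~\ref{remark.UD.code} and Proposition~\ref{partition.UD.code}. The key observation that makes this work is that the reduced form of an element of $M'$ with respect to $X^*,\{y\}^*$ is simultaneously a reduced form with respect to the coarser pair $X^*,L$, so the known uniqueness in $N$ forces the desired uniqueness in $M'$.
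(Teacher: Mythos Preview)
Your argument is correct and follows the same route as the paper: both reduce the question to whether, for every $w\in A^+\setminus X$, the monoid $(X\cup\{w\})^*$ fails to decompose as $X^**\{w\}^*$, invoking Remark~\ref{remark.UD.code} and Proposition~\ref{partition.UD.code}. The paper compresses this into a one-line chain of equivalences and asserts without detail the step you handle most carefully (extracting a single $y$ from a nontrivial free factor $L$ so that $(X\cup\{y\})^*=X^**\{y\}^*$), so your write-up is effectively a fuller version of the same proof.
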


\begin{proof}
If $X$ is a maximal \UD code then $\forall w\in A^+$, $X':=X\cup \{w\}$
is not a \UD code and, by Remark \ref{remark.UD.code} and Proposition \ref{partition.UD.code},
this imply that $\forall w\in A^+$, $(X')^*$ is not the free product of $X^*$ and $\{w\}^*$
and this is true iff $X^*$ is full. 
\end{proof}

A free monoid $M\subseteq A^*$ is said \emph{maximal free} if
$M\neq A^*$ and $M$ is not properly contained in any other free monoid
different from $A^*$.

If a free monoid is maximal free then it is full. Indeed if
a free monoid is maximal free then its base is a maximal \UD code
(see \cite{BerstelPerrinReutenauer10})
so by Theorem~\ref{maximaUD.maximal} the monoid is full.

\noindent We have proved then the following theorem.
\begin{theorem}\label{maxfree.full}
Let $M$ be a free monoid. If $M$ is maximal free then it is full.
\end{theorem}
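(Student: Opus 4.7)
The plan is to chain together two results. First, the classical correspondence (as in Berstel--Perrin--Reutenauer) between maximal free submonoids of $A^*$ and maximal \UD codes: a free submonoid of $A^*$ is maximal free if and only if its base is a maximal \UD code. Second, the paper's own Theorem \ref{maximaUD.maximal}, which states that a \UD code $X$ is a maximal \UD code if and only if the monoid $X^*$ is full.

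Accepting the first result as a black box from the literature, I would proceed as follows. Since $M$ is free, by Remark \ref{remark.UD.code} its base $X$ is a \UD code and $M = X^*$. Because $M$ is maximal free by hypothesis, the classical equivalence gives that $X$ is a maximal \UD code. Applying Theorem \ref{maximaUD.maximal} then yields that $M = X^*$ is full, which is the desired conclusion.

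The only step of real substance is invoking the classical equivalence in the direction ``maximal free $\Rightarrow$ base maximal \UD''. The converse direction would be relatively easy: a properly larger \UD code $X' \supsetneq X$ produces a properly larger free monoid $(X')^* \supsetneq X^*$ (if some $w \in X' \setminus X$ lay in $X^*$, writing $w = x_1 \cdots x_n$ with $x_i \in X \subseteq X'$ would give two distinct \UD factorizations of $w$ over $X'$), and generically one can arrange $(X')^* \neq A^*$. The direction we actually need is slightly more delicate because the definition of maximal free monoid exempts $A^*$ itself, so one must rule out that every \UD extension of $X$ saturates to $A^*$; this subtlety is precisely what is addressed in the Berstel--Perrin--Reutenauer reference, and for the present proof it suffices to cite it.
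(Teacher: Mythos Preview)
Your proof is correct and matches the paper's argument exactly: cite Berstel--Perrin--Reutenauer for ``maximal free $\Rightarrow$ base is a maximal \UD code'', then apply Theorem~\ref{maximaUD.maximal}. One minor remark on your closing commentary: the argument ``a properly larger \UD code $X' \supsetneq X$ yields a properly larger free monoid $(X')^*$'' is the contrapositive of the direction you actually need (maximal free $\Rightarrow$ base maximal \UD), not of its converse, so your labeling of which implication is ``easy'' versus ``delicate'' is inverted---but this is tangential and does not affect the proof itself.
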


\begin{remark}

In \cite{BerstelPerrinReutenauer10} it is proved that
uniform codes $A^n$ are maximal \UD codes $\forall n\geq 1$ and
it is been underlined that with
$n=lm,\, l,m>1$, we have $(A^n)^*\subsetneq (A^m)^*\subsetneq A^*$.
This has two consequences: from one hand, by Theorem~\ref{maximaUD.maximal},
we can see that the inverse of Proposition~\ref{max.full} is false,
moreover, since the monoids $(A^n)^*$ are free,
again by Theorem~\ref{maximaUD.maximal}, also the inverse of
Theorem~\ref{maxfree.full} is false.

\end{remark}

Recalling that if $M$ is a free monoid then
its base is a \UD code, then from
Theorem~\ref{maximaUD.maximal} we have the following
characterization of a maximal \UD codes in terms of
algebraic properties of the monoid generated by the
code itself.

\begin{theorem} \label{UD.max.characterization}
Let $X\subseteq A^+$ be a code that is a base.
Then $X$ is a maximal \UD code iff $X^*$ is a full and
free submonoid of $A^*$.
\end{theorem}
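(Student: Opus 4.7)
The plan is to reduce everything to the two prior facts already in hand: Theorem~\ref{maximaUD.maximal}, which equates maximality of a \UD code with fullness of the generated monoid, and Remark~\ref{remark.UD.code}, which says that $X$ is \UD iff $X^* = Fr_{x \in X}\{x\}^*$ (so in particular $X^*$ is a free monoid). The statement is really just the packaging of these two facts once we know that the hypothesis ``$X$ is a base'' lets us move between a free monoid and its canonical generating set without losing information.

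For the forward implication, I would assume $X$ is a maximal \UD code. Theorem~\ref{maximaUD.maximal} immediately gives that $X^*$ is full. Since $X$ is \UD, Remark~\ref{remark.UD.code} tells us $X^* = Fr_{x \in X}\{x\}^*$, and hence $X^*$ is free. Nothing further is needed here.

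For the reverse implication, I would assume $X$ is a base and $X^*$ is both full and free. Because $X^*$ is free, its base is a \UD code (this is the classical fact recalled just before the theorem statement: the base of a free monoid is \UD). Since by assumption $X$ is the base of $X^*$, we conclude $X$ is \UD. Now invoking Theorem~\ref{maximaUD.maximal} in the other direction, fullness of $X^*$ together with \UD-ness of $X$ yields that $X$ is a maximal \UD code.

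The only place where care is required is the reverse direction, where the assumption that $X$ is a base is essential: without it, $X^*$ being free would only tell us that \emph{some} set (the base of $X^*$) is \UD, not that $X$ itself is. Since this hypothesis is in the theorem, the obstacle disappears, and the entire argument is a short two-line chain in each direction.
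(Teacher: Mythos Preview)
Your proposal is correct and matches the paper's own reasoning exactly: the paper does not give a formal proof of this theorem but simply derives it from Theorem~\ref{maximaUD.maximal} together with the recalled fact that the base of a free monoid is \UD, which is precisely your argument. Your observation about why the hypothesis ``$X$ is a base'' is needed in the reverse direction is also apt and makes explicit something the paper leaves implicit.
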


We see now how with this notion of maximality
we will recover some results concerning the \UD codes.

We first recall some definitions.

\noindent A word $w\in A^*$ is a {\em factor} of a word $z\in A^*$
if there exist $u,v\in A^*$ such that $z=uwv$.
For any $X\subseteq A^*$ let $F(X)$ denote the set of factors of
words in $X$.

\noindent A set $X\subseteq A^*$ is {\em dense} if $F(X)= A^*$. A
set that is not dense is called {\em thin}.

\noindent Finally, a set $X\subseteq A^*$ is {\em complete} if $X^*$
is dense.

\begin{theorem}
Let $X\subseteq A^+$ be a maximal code then it is a complete set.
\end{theorem}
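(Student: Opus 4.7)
The plan is to prove the contrapositive: if $X$ is not complete, then $X^{*}$ is not full. Concretely, I will construct a word $w\in A^{+}$ such that the submonoid $N$ of $A^{*}$ generated by $X\cup\{w\}$ is the free product $X^{*}*w^{*}$ with $w^{*}\neq\{1\}$; this exhibits $X^{*}\preceq N$ strictly and therefore contradicts the fullness of $X^{*}$.

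Since $X$ is not complete, $F(X^{*})\neq A^{*}$, so I can pick a non-empty word $u\in A^{*}\setminus F(X^{*})$. The key requirement on $w$ is that it should simultaneously (a) contain $u$ as a factor, so that automatically $w\notin F(X^{*})$, and (b) be \emph{unbordered}, meaning that no proper non-empty prefix of $w$ is also a suffix of $w$. Over a one-letter alphabet every non-empty code is trivially complete, so we may assume $|A|\geq 2$; a standard construction in combinatorics on words then produces an unbordered word with any prescribed factor, for instance by sandwiching $u$ between carefully chosen blocks of two distinct letters.

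With $w$ chosen, let $N$ be the monoid generated by $X\cup\{w\}$. I will show that every element $z\in N$ has a unique reduced expression alternating between non-identity elements of $X^{*}$ and positive powers of $w$. The argument is that the positions at which $w$ occurs as a factor of $z$ are determined by $z$ itself and must coincide exactly with the $w$-blocks of any reduced expression. An occurrence of $w$ cannot lie inside an $X^{*}$-block, since $w\notin F(X^{*})$; and it cannot straddle the boundary between an $X^{*}$-block and a $w^{k}$-block, for the fragment of $w$ lying on the $w^{k}$-side would then be simultaneously a proper non-empty prefix and a proper non-empty suffix of $w$, contradicting unborderedness. Because occurrences of an unbordered word cannot overlap, the positions of the $w$'s in $z$, together with their grouping into maximal runs, are determined by $z$; this in turn forces the exponents $k_{i}$ and the intermediate words $m_{i}\in X^{*}$. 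Hence $N=X^{*}*w^{*}$, and $X^{*}$ is not full.

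The main obstacle I anticipate is the case analysis ruling out a straddling occurrence of $w$ across a block boundary, where several sub-cases arise depending on whether the overlap is contained in one copy of $w$ or extends across several consecutive copies inside a run $w^{k_{i}}$; each case is dispatched by comparing lengths and invoking unborderedness. A secondary technical point is the existence of an unbordered word containing the prescribed factor $u$ over any alphabet with at least two letters, which is standard. All remaining steps are routine applications of the definition of full monoid and of the free product.
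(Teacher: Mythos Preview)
Your proposal is correct and follows essentially the same route as the paper: prove the contrapositive by choosing an unbordered word $w$ containing a prescribed non-factor of $X^{*}$, and then argue that $(X\cup\{w\})^{*}=X^{*}*w^{*}$ because the occurrences of $w$ in any message are uniquely located. The only difference is cosmetic: the paper gives the explicit construction $w=vb^{|v|-1}$ (with $a$ the first letter of $v$ and $b\neq a$) rather than appealing to a ``standard'' unbordered extension, and it compresses your straddling case analysis into the single remark that unborderedness lets one distinguish all occurrences of $w$.
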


\begin{proof}
Let $X$ be a code over the alphabet $A$, with $card(A) \geq 2$
(the case $card(A) < 2$ is trivial). We will prove that if $X$
is not complete then $X^*$ is not full. If $X$ is not complete,
there exists a word $v \in A^*$ such that $v$ does not belong to
$F(X^*)$. Let $a$ be the first letter of $v$ and let $b \in
A\smallsetminus\{a\}$. Consider the word $w = vb^{|v|-1}$. By
construction, $w$ is {\em unbordered}, i.e. no proper prefix of
$w$ is a suffix of $w$. Since $v$ does not belong to $F(X^*)$, we
have that also $w$ does not belong to $F(X^*)$.
Let $M:=(X\cup \{w\})^*$ we now prove that every word $t \in (X \cup \{w\})^*$ has an
unique expression in reduced form with respect to $X^*,\,\{w\}^*$.
Indeed, since $w$ is unbordered, we
can uniquely distinguish all occurrences of $w$ in $t$, i.e. $t$
has a unique factorization of the form
\[
t = u_1wu_2w \cdots wu_n,
\]
with $n \geq 1$ and $u_i \in X^*$, for $i = 1,\dots,n$.

\noindent This shows that $M=(X^*)*(w^*)$ and $X^*$ is not full.
\end{proof}

By the previous theorem we deduce the following corollary.

\begin{corollary}
Any full monoid $M\subseteq A^*$ is dense in $A^*$.
\end{corollary}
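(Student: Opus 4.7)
The plan is to reduce this corollary directly to the preceding theorem via the definitions of \emph{full}, \emph{maximal code}, and \emph{complete}. The content of the corollary is just an algebraic rephrasing of the theorem once we pass from the code to the monoid it generates (and back).

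First, I would let $M \subseteq A^*$ be a full monoid, and invoke the fact recalled at the start of Section~3 that $M$ has a unique base $X = (M \setminus 1) \setminus (M \setminus 1)^2$, so that $M = X^*$. By the definition of a maximal code, since $X^* = M$ is full, the code $X$ is maximal. Applying the previous theorem to $X$, we conclude that $X$ is complete, i.e.\ $X^*$ is dense in $A^*$. Since $X^* = M$, this says exactly that $M$ is dense.

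The main (essentially only) conceptual point to check is that the notion of "maximal code" in the hypothesis of the previous theorem matches the condition "$X^*$ is full" that we have here; this is tautological from the definition given just before the theorem. There is no real obstacle: once the base $X$ of $M$ is identified, the corollary is immediate. I would therefore present the argument in a single short paragraph, stressing the chain of implications
\[
M \text{ full} \;\Longrightarrow\; X \text{ maximal code} \;\Longrightarrow\; X^* \text{ dense} \;\Longrightarrow\; M \text{ dense},
\]
where $X$ denotes the base of $M$.
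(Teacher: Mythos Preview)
Your proposal is correct and is precisely the argument the paper intends: the paper merely states that the corollary follows from the preceding theorem, and your chain $M$ full $\Rightarrow$ its base $X$ is a maximal code $\Rightarrow$ $X$ complete $\Rightarrow$ $X^*=M$ dense spells out exactly that deduction. There is nothing to add or correct.
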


The inverse of previous corollary is not true. Indeed the Dyck
code $D$ over $A=\{a,b\}$ is a \UD dense code and for each $x\in D$ the code
$D\smallsetminus \{x\}$ remains dense (see \cite{BerstelPerrinReutenauer10})
but it is no more a maximal \UD code and so by Theorem
\ref{UD.max.characterization} $(D\smallsetminus \{x\})^*$ it is not
full in $A^*$.

The next lemma holds (see \cite{BerstelPerrinReutenauer10}).

\begin{lemma}
Let $X\subseteq A^+$ be a thin and complete code. Then all words
$w\in A^*$ satisfy
\[
(X^*wX^*)^+\cap X^*\neq \emptyset.
\]
\end{lemma}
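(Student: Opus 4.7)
The plan is to invoke the classical existence of a \emph{synchronizing word} for $X$: a nonempty word $s\in X^*$ such that, for every $a,b\in A^*$, $asb\in X^*$ implies $as\in X^*$ and $sb\in X^*$. This is a core result of~\cite{BerstelPerrinReutenauer10} for thin complete \UD codes, and is where both the thinness and completeness hypotheses are really used; once $s$ is in hand, the lemma reduces to a short algebraic computation.

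Given such $s$, I would first apply completeness of $X$ to the word $sws$ to produce $p,q\in A^*$ with $pswsq\in X^*$. Next I would apply the synchronizing property twice to peel off the two $s$'s: reading the word as $p\cdot s\cdot(wsq)$ and taking $(a,b)=(p,wsq)$ yields $ps\in X^*$ and $swsq\in X^*$; reading $swsq$ as $(sw)\cdot s\cdot q$ and taking $(a,b)=(sw,q)$ then yields $sws\in X^*$ and $sq\in X^*$. Since $s$ itself lies in $X^*$, closure under concatenation gives $ps^2=(ps)\cdot s\in X^*$ and $s^2q=s\cdot(sq)\in X^*$.

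Finally, the decisive step: the product $(ps)\cdot(sws)\cdot(sq)=ps^2ws^2q$ lies in $X^*$ as a concatenation of three elements of $X^*$, and reparsing it as $(ps^2)\,w\,(s^2q)$ with $ps^2,s^2q\in X^*$ exhibits it as an element of $X^*wX^*\cap X^*\subseteq(X^*wX^*)^+\cap X^*$. The main obstacle, and really the whole content of the proof, is the initial step of producing the synchronizing word $s$: this is where one exploits the existence of a word $y\notin F(X)$ (from thinness) together with completeness, using the fact that any occurrence of $y$ inside an element of $X^*$ is forced to straddle boundaries of every $X$-factorization, and then bootstrapping this to a word $s$ whose own occurrences must themselves align with factorization boundaries.
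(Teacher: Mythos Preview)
The paper supplies no proof of this lemma; it simply refers the reader to~\cite{BerstelPerrinReutenauer10}. Your plan---obtain a synchronizing word $s$ for $X$ and then carry out the short computation you describe---is exactly the classical argument found in that reference, so your proposal agrees with the paper's (deferred) proof. The computation itself is correct and can even be shortened: once you have established $sws\in X^*$ together with $s\in X^*$, the word $sws$ already witnesses $X^*wX^*\cap X^*\subseteq(X^*wX^*)^+\cap X^*$, so the further passage to $ps^2ws^2q$ is unnecessary.

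One point deserves care. In this paper a \emph{code} is an arbitrary subset of $A^+$, whereas in~\cite{BerstelPerrinReutenauer10} a code is by definition \UD; and the lemma is applied immediately afterwards (Theorem~\ref{complete-maximal}) to codes that need not be \UD. You invoke the existence of a synchronizing word explicitly ``for thin complete \UD codes,'' and your final paragraph speaks of occurrences of $y$ aligning with factorization boundaries in a way that tacitly leans on uniqueness. The underlying construction does not actually require \UD---what is used is only that each occurrence of $y\notin F(X)$ inside an element of $X^*$ meets \emph{some} boundary of \emph{some} $X$-factorization---but since the paper needs the lemma in this broader setting, you should say explicitly that the \UD hypothesis plays no role in the part of~\cite{BerstelPerrinReutenauer10} you are citing, rather than attaching it as a standing assumption.
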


\noindent Then we can prove the following theorem.

\begin{theorem} \label{complete-maximal}
Let $X$ be a thin code. If $X$ is complete then it is maximal.
\end{theorem}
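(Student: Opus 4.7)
The plan is to argue by contradiction, using the preceding lemma to manufacture an element of $X^*$ with two distinct reduced forms in an alleged nontrivial free decomposition.

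Suppose $X^*$ is not full. Then there exist submonoids $N,L\subseteq A^*$, with $L$ nontrivial, such that $N=X^**L$. I would pick any $w\in L\setminus\{1\}$, so that $w\in A^+$. Since $X$ is thin and complete, the preceding lemma supplies words $u_1,v_1,\dots,u_n,v_n\in X^*$ with $n\geq 1$ such that
$$t:=u_1wv_1u_2wv_2\cdots u_nwv_n\in X^*.$$
Viewed inside the free product $X^**L$, the element $t$ belongs to $X^*$ and is nontrivial (because $w\neq 1$), so its reduced form with respect to the factors $X^*$ and $L$ is simply the single $X^*$-block $t$, containing no $L$-factor. On the other hand, reading the displayed factorization as an interleaving of blocks from $X^*$ and $L$ and collapsing any empty $u_i$ or $v_j$ merges neighbouring copies of $w$ into powers $w^k$, $k\geq 1$; each such power is a nontrivial element of $L$, so the collapsed expression is a genuine reduced form and contains at least one $L$-factor. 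The two reduced forms disagree, contradicting the uniqueness of the reduced form in a free product.

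The only delicate step, and what I view as the main obstacle, is to verify that after the collapsing procedure the resulting expression really is a reduced form with a surviving, nontrivial $L$-factor; this reduces to the immediate observation that $w^k\neq 1$ for every $k\geq 1$ because $w\in A^+$, together with the fact that the alternation between $X^*$-blocks and $L$-blocks is preserved under the collapsing.
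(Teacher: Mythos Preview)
Your argument is correct and follows essentially the same route as the paper: pick a nontrivial $w$ in the would-be free cofactor, invoke the lemma to obtain an element of $X^*$ of the form $u_1wv_1\cdots u_nwv_n$, and observe that this element then admits two distinct reduced forms, contradicting the free-product hypothesis. Your treatment is in fact slightly more careful than the paper's, which writes the lemma's output loosely as $(v_1wv_2)^+$ and does not discuss the collapsing of empty $u_i,v_j$; your explicit handling of that case (noting that $w^k\neq 1$ and that alternation is preserved) fills a small gap the paper leaves implicit.
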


\begin{proof}
Let $M\subseteq A^+$ be a monoid and let $1\neq w\in M$.
By previous lemma there exist $v_1, v_2\in X^*$ and $z\in X^+$
such that $z=(v_1wv_2)^+$.
From this $z$ has not a unique expression in reduced
form with respect to $X^*$ and $M$. Then $X^*$ is full and $X$ is
a maximal code.
\end{proof}

Putting together the last two results we have:

\begin{theorem}
Let $X\subseteq A^+$ be a thin code. Then $X$ is complete iff it is maximal.
\end{theorem}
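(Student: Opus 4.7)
The plan is to observe that this theorem is essentially the conjunction of two results already established in the section, so the proof reduces to a citation-style argument with no new content.

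First I would handle the ``complete implies maximal'' direction. Here the hypothesis that $X$ is thin is exactly what is needed: Theorem~\ref{complete-maximal} states precisely that a thin complete code is maximal, so this direction is an immediate invocation of that theorem. I would not re-derive the lemma on $(X^*wX^*)^+\cap X^*\neq\emptyset$ that underlies it; I would just cite Theorem~\ref{complete-maximal}.

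Next I would handle the ``maximal implies complete'' direction. This is even simpler because it does not require thinness at all: the earlier theorem asserting that every maximal code is complete was proved in full generality (its proof constructed an unbordered word $w=vb^{|v|-1}$ from an alleged gap in $F(X^*)$ and showed $(X\cup\{w\})^*=X^**\{w\}^*$, giving $X^*\prec(X\cup\{w\})^*$ and hence non-fullness of $X^*$). So for this direction I would simply cite that theorem.

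Putting these two implications together gives the biconditional. Since both halves are straight applications of previously proved results, there is no real obstacle; the only thing to emphasize in the write-up is that the thinness hypothesis is used only in the ``complete $\Rightarrow$ maximal'' direction, while the reverse direction holds for arbitrary codes. This observation is worth stating explicitly so the reader sees that the theorem is the sharpest possible combination of the two preceding results.
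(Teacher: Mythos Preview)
Your proposal is correct and matches the paper's own treatment exactly: the paper introduces this theorem with the phrase ``Putting together the last two results we have'' and gives no further argument, so the biconditional is obtained precisely by combining the general ``maximal $\Rightarrow$ complete'' theorem with Theorem~\ref{complete-maximal} for the thin case. Your additional remark that thinness is needed only for the ``complete $\Rightarrow$ maximal'' direction is accurate and a nice clarification, though the paper does not make it explicit.
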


Again in \cite{BerstelPerrinReutenauer10}, the following
result is proved.
 
\begin{proposition}
Any regular \UD code is thin.
\end{proposition}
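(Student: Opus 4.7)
The plan is to establish the contrapositive: if $X \subseteq A^+$ is a regular code that is not thin (i.e.\ $F(X) = A^*$), then $X$ is not UD. The case of a unary alphabet is trivial, since any UD subset of $a^+$ is a singleton, hence thin; so I assume $k := |A| \geq 2$. Throughout, write $a_j := |X \cap A^j|$ and $f_X(z) := \sum_{j \geq 0} a_j z^j$.

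The first step extracts a combinatorial consequence of density. Fix a trim DFA for $X$ with $n$ states. Given any word $w$ of length $\ell$, the assumption $F(X) = A^*$ combined with trimness produces $u, v \in A^*$ with $|u|, |v| < n$ and $uwv \in X$; hence every word of length $\ell$ appears as a factor of some element of $X$ of length $j$ with $\ell \leq j < \ell + 2n$. Since any such codeword contains at most $j - \ell + 1 \leq 2n$ factors of length $\ell$, counting yields
\begin{equation*}
    k^\ell \;\leq\; 2n \sum_{\ell \leq j < \ell + 2n} a_j .
\end{equation*}

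The second step upgrades the UD hypothesis, via regularity, to strict exponential decay of $a_j$. Assume for contradiction that $X$ is UD; Kraft--McMillan gives $f_X(1/k) = \sum_j a_j k^{-j} \leq 1 < \infty$. Since $X$ is regular, $f_X$ is a rational power series with non-negative integer coefficients, so Pringsheim's theorem forces its radius of convergence $r$ to be a (real positive) pole; finiteness at $1/k$ then gives $r > 1/k$. Consequently $a_j \leq C q^j$ for some constant $C$ and some $q < k$, where $q$ absorbs any subexponential factor arising from the partial-fraction decomposition of $f_X$.

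To close the argument, summing the exponential bound yields $\sum_{\ell \leq j < \ell + 2n} a_j \leq C' q^\ell$, and inserting this into the inequality of the first step produces $(k/q)^\ell \leq 2n\, C'$, i.e.\ an exponentially growing quantity bounded by a constant. Since $k/q > 1$, this fails for all large $\ell$, giving the required contradiction. The main obstacle is precisely the second step: Kraft--McMillan alone yields $\sum a_j k^{-j} \leq 1$, which is too weak to contradict the density bound of the first step; it is the regularity hypothesis, through rationality of $f_X$ and Pringsheim's theorem, that is needed to promote this to the strict exponential decay $a_j \leq C q^j$ with $q < k$.
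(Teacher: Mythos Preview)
Your argument is correct. The counting bound from density via a trim automaton, the use of Kraft--McMillan to get $f_X(1/k)\le 1$, and the Pringsheim/rationality step forcing the radius of convergence strictly above $1/k$ (hence $a_j\le Cq^j$ with $q<k$) all go through, and the final comparison yields the contradiction.

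However, your route is genuinely different from the one the paper relies on. The paper does not prove the proposition itself but cites \cite{BerstelPerrinReutenauer10}, and it explicitly observes that \emph{the same proof} there establishes the stronger statement that any regular code which is a \emph{base} is thin. That proof is a short automata/finite-monoid argument: it never invokes Kraft--McMillan and uses of the code only the property $X\cap XX^+=\emptyset$. Your proof, by contrast, hinges on the McMillan inequality, which is a consequence of unique decipherability and fails for bases that are not \UD (e.g.\ $X=\{aa,ab,ba,bb,aba\}$ over $\{a,b\}$ is a base with Kraft sum $9/8$). So while your analytic approach is an elegant alternative for the \UD case, it cannot deliver the generalization to bases that the paper needs immediately afterwards for the corollary on regular bases. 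If you want your write-up to support that corollary, you would have to supply a separate argument for bases; the cited combinatorial proof gives both at once.
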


\noindent Indeed the proof of the cited result shows the following more general proposition.
\begin{proposition}
Any regular code that is a base is thin.
\end{proposition}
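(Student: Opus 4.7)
The plan is to follow almost verbatim the proof of the preceding proposition from \cite{BerstelPerrinReutenauer10} (``Any regular \UD code is thin''), observing, as the paper already remarks, that the argument there never actually uses the full strength of unique decipherability but only that $X$ is the base of $X^{*}$, i.e.\ that no code word can be written as a concatenation of two or more code words.

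Concretely, I would argue by contradiction. Suppose $X\subseteq A^{+}$ is regular and is a base, yet $F(X)=A^{*}$. Since $X$ is regular, fix a trim finite automaton $\mathcal{A}=(Q,\delta,q_0,F)$ recognizing $X$ with $n=|Q|$, and fix a code word $x_0\in X$. By density, $x_0^{n+1}\in F(X)$, so there exist $u,v\in A^{*}$ with $u\,x_0^{n+1}\,v\in X$. Looking at the $n+2$ states of $\mathcal{A}$ reached after reading the prefixes $u,\,ux_0,\,ux_0^{2},\dots,\,ux_0^{n+1}$ and applying the pigeonhole principle yields indices $0\le i<j\le n+1$ whose states coincide; this produces a pumping family $u\,x_0^{i+k(j-i)}\,v\in X$ for every $k\ge 0$.

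The final step is to bootstrap this pumping, together with a second application of density (say to a word of the form $x_0^{N}\cdot x_0^{N}$ for sufficiently large $N$), into a genuine factorization of a single code word $w\in X$ as a concatenation $w=w_1 w_2\cdots w_r$ with $r\ge 2$ and all $w_i\in X$, contradicting $X=(X^{*}\smallsetminus 1)\smallsetminus (X^{*}\smallsetminus 1)^{2}$. The main obstacle is precisely this last step: passing from the pumping of an internal block to a factorization of the whole code word as a product of code words requires carefully aligning two accepting runs of $\mathcal{A}$ by synchronising their visited states. This is the detailed construction carried out in \cite{BerstelPerrinReutenauer10}; the only modification needed here is to replace that proof's final appeal to unique decipherability by an appeal to the base property.
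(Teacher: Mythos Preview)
Your proposal is correct and coincides with the paper's approach: the paper's entire proof is the one sentence preceding the statement, namely that the argument in \cite{BerstelPerrinReutenauer10} for the \UD case already invokes only the base property $X=(X^{*}\smallsetminus 1)\smallsetminus(X^{*}\smallsetminus 1)^{2}$ rather than full unique decipherability. Your outline goes a bit further by sketching that argument, but since you too defer the decisive step to \cite{BerstelPerrinReutenauer10}, the two treatments are the same; the only quibble is that your pumped family should read $u\,x_0^{\,i+k(j-i)+(n+1-j)}\,v$ rather than $u\,x_0^{\,i+k(j-i)}\,v$.
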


Then we can conclude with the following
corollary.

\begin{corollary}
Let $X\subseteq A^+$ be a regular code that is a base. Then $X$ is complete iff it is maximal.
\end{corollary}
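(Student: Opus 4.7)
The plan is essentially to chain the two immediately preceding results. The corollary asks us to prove, for a regular code $X$ that is a base, the equivalence of completeness and maximality; but we already have the equivalence ``thin $+$ complete $\Leftrightarrow$ thin $+$ maximal'' (the theorem right before the final proposition) and the fact that any regular code which is a base is thin. So the entire content of the corollary is the observation that the hypothesis ``regular and a base'' is a strengthening of ``thin.''

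More concretely, first I would invoke the proposition ``Any regular code that is a base is thin'' to deduce that $X$ is thin. Then I would apply the theorem ``Let $X\subseteq A^+$ be a thin code. Then $X$ is complete iff it is maximal'' to the code $X$ directly, obtaining the desired biconditional. That is the whole argument.

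There is essentially no obstacle here: the only thing to be careful about is that the notion of ``base'' in the hypothesis matches the hypothesis of the preceding proposition (it does, by construction, since the paper uses ``base'' consistently to mean the unique minimal set of generators of $X^*$), and that maximality in the conclusion refers to the extended notion defined via fullness of $X^*$ rather than to \UD-maximality (which is consistent with the statement of the theorem on thin codes that we are invoking). No new calculation is needed; the corollary is purely a specialization of the thin-code equivalence to the subclass of regular bases, justified by the thinness proposition.
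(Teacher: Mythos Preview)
Your proposal is correct and matches the paper's approach exactly: the corollary is stated without an explicit proof in the paper, being an immediate consequence of the proposition that any regular code which is a base is thin, combined with the theorem that for thin codes completeness and maximality are equivalent. Your care in checking that ``base'' and ``maximal'' are used consistently with the preceding results is appropriate and accurate.
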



\section{Concluding remarks}
In this paper we have given a definition of maximality
that extends the existing one for \UD codes re-establishing,
in the general case, some classical results valid for \UD
codes. At this point it is interesting to understand which,
among the deep results concerning maximal \UD codes,
can be recovered from the more general definitions of maximality
and coding partition. (We emphasize that the notion of coding partition
generalizes that of \UD code: the ``uniquely decipherability'' at the level of
classes of the partition takes the place of the uniquely decipherability
existing between the words of a \UD code.)
Two subjects that it is possible to deepen are composition of codes
and probability distributions.

\nocite{*}
\bibliographystyle{eptcs}
\bibliography{mmc}
\end{document}